\theoremstyle{definition}
\newtheorem{claim}{Claim}%[section]
\newtheorem*{claim*}{Claim}
\newtheorem*{lemma*}{Lemma}
\newtheorem*{remark*}{Remark}
\newtheorem*{conjecture}{Conjecture}
\definecolor{chi}{rgb}{1,0,0}
\begin{document}
%
% paper title
% Titles are generally capitalized except for words such as a, an, and, as,
% at, but, by, for, in, nor, of, on, or, the, to and up, which are usually
% not capitalized unless they are the first or last word of the title.
% Linebreaks \\ can be used within to get better formatting as desired.
% Do not put math or special symbols in the title.
\title{Age of Broadcast and Collection in Spatially Distributed Wireless Networks}
%
%
% author names and IEEE memberships
% note positions of commas and nonbreaking spaces ( ~ ) LaTeX will not break
% a structure at a ~ so this keeps an author's name from being broken across
% two lines.
% use \thanks{} to gain access to the first footnote area
% a separate \thanks must be used for each paragraph as LaTeX2e's \thanks
% was not built to handle multiple paragraphs
%

% \author{Chirag~R.~Rao,
        % Eytan~Modiano~\IEEEmembership{Member,~IEEE}}
        \author{Chirag Rao and Eytan Modiano\thanks{Chirag Rao and Eytan Modiano are with the Laboratory for Information and Decision Systems (LIDS), Massachusetts Institute
of Technology, Cambridge, MA, 02139, USA. The final version of this paper will
appear in the proceedings of IEEE INFOCOM 2023.
E-mail: \{crao, modiano\}@mit.edu.}}

\maketitle
\begin{abstract}
  We consider a wireless network with a base station broadcasting and collecting time-sensitive data to and from spatially distributed nodes in the presence of wireless interference. The Age of Information (AoI) is the time that has elapsed since the most-recently delivered packet was generated, and captures the freshness of information. In the context of broadcast and collection, we define the Age of Broadcast (AoB) to be the amount of time elapsed until all nodes receive a fresh update, and the Age of Collection (AoC) as the amount of time that elapses until the base station receives an update from all nodes.
  We quantify the average broadcast and collection ages in two scenarios: 1) instance-dependent, in which the locations of all nodes and interferers are known, and 2) instance-independent, in which they are not known but are located randomly, and expected age is characterized with respect to node locations. In the instance-independent case, we show that AoB and AoC scale super-exponentially with respect to the radius of the region surrounding the base station. Simulation results highlight how expected AoB and AoC are affected by network parameters such as network density, medium access probability, and the size of the coverage region.
\end{abstract}

% Note that keywords are not normally used for peerreview papers.
% \begin{IEEEkeywords}
% AoI, wireless, broadcast, collection, spatial randomness
% \end{IEEEkeywords}

% For peer review papers, you can put extra information on the cover
% page as needed:
% \ifCLASSOPTIONpeerreview
% \begin{center} \bfseries EDICS Category: 3-BBND \end{center}
% \fi
%
% For peerreview papers, this IEEEtran command inserts a page break and
% creates the second title. It will be ignored for other modes.
% \IEEEpeerreviewmaketitle
\section{Introduction}
Collection and broadcast of fresh information over spatially-distributed wireless nodes is important for proper functioning of real-time systems, such as search-and-rescue drones or environmental monitoring using IoT sensors~\cite{9380899}. Dynamic environments and the lack of wired infrastructure necessitate deployment of highly-distributed, ad-hoc network of sensors to gather and send information updates wirelessly, where nodes must communicate with minimal coordination overhead using simple random access schemes.

Such networks must also operate under wireless communication constraints, including interference, fading, and path loss. Ensuring broadcast and collection of the freshest information possible in such a setting is a considerable challenge.

A popular paradigm for measuring the freshness of information observed from a process is the Age of Information (AoI)~\cite{sun2019age,kaul2012real,kaul2011minimizing}. The literature addressing AoI and wireless networks is extensive. Average and peak AoI in wireless networks were first characterized in~\cite{costa2014age}. Optimal wireless link scheduling was studied in~\cite{he2016optimal,he2016optimizing,hsu2017age,hsu2019scheduling}, relying on a centralized scheduler that is able to coordinate link activations, and the authors of~\cite{kadota2019minimizing} considered scheduling policies that minimize AoI in wireless networks with packets randomly arriving and queueing at the base station. In addition, the authors of~\cite{kadota2021age} studied scheduling with random packet arrivals in a random access setting. Several works have addressed AoI and broadcast. In particular, the authors of~\cite{kadota2016minimizing,kadota2018scheduling} found optimal centralized scheduling policies for broadcast from a base station to a number of nodes, minimizing functions of AoI such as Expected Weighted Sum AoI.  In~\cite{zhong_multicast_2017} the authors investigated AoI in multicast and broadcast networks with i.i.d. exponential (continuous-time) inter-packet delivery times. Works such as~\cite{kadota_scheduling_2018} investigated network scheduling to minimize AoI under general wireless channel unreliability, while~\cite{hsu2019scheduling} studied scheduling policies with random arrivals, modeling the problem as a Markov Decision Process. From an information theoretic perspective,~\cite{chen2019benefits} explored the effect of coding on the AoI in two-user broadcast networks, and ~\cite{wang2019broadcast} addressed AoI for Broadcast in CSMA/CS wireless networks, assuming network connectivity follows the Protocol Model~\cite{gupta_capacity_2000}. The authors of~\cite{farazi2018age} explored AoI in all-to-all broadcast wireless networks, deriving average and peak AoI using fundamental properties of graphs.

% AoI in multi-hop networks has also been studied in~\cite{farazi2018age,talak2017minimizing,bedewy2019age,krishnan2020minimizing}.

More recently, AoI in spatially-distributed networks has been investigated. The authors of~\cite{tripathi2021age} investigated data dissemination and gathering, modeling spatial separation as edges on a mobility graph. The authors of~\cite{mankar_spatial_2021,mankar_throughput_2021}, deployed stochastic geometry analysis to capture the spatiotemporal statistics of AoI in networks where nodes are distributed as a homogeneous point process. The authors of~\cite{yang_age_2020,yang_understanding_2020-1} optimized network parameters such as the medium access probability to minimize average and peak AoI, leveraging knowledge of the interference statistics of Poisson-distributed wireless networks. While AoI has been considered in spatially-distributed wireless networks, the important cases of wireless broadcast and collection in a spatially distributed network have not been addressed.

\textbf{Our main contribution in this work} is to introduce the notion of Age to the broadcast and collection of information.  We define two metrics -- the \textit{age of broadcast} (AoB) and the \textit{age of collection} (AoC) -- that characterize the amount of time elapsed since all receivers successfully receive an update in the broadcast case or all transmitters successfully deliver a packet to the base station in the collection case. We consider both the instance-dependent, and the instance-independent AoB and AoC. In the instance-dependent scenario, the locations of all interferers, transmitters and receivers are fixed and known. In the instance-independent scenario, the positioning of nodes and interferers is unknown but is distributed according to a Poisson point process. 

% In the instance-independent case we obtain bounds on the growth of expected AoB and AoC

% in order to understand the relationship between age and network parameters such as the radius $R$, density $\lambda$, and medium access rate $p$ when broadcasting or collecting from a set of nodes.

The rest of the paper is organized as follows. In~\Cref{sec:model}, we introduce the system model and define AoB and AoC. We then detail preliminaries in~\Cref{sec:preliminaries}. In~\Cref{sec:broadcast}, we characterize the expected AoB, then characterize AoC in~\Cref{sec:collection}. Numerical results from simulation are presented in~\Cref{sec:numerical}, and concluding remarks and future directions are stated in~\Cref{sec:conclusion}.
\section{System Model}\label{sec:model}
We now introduce the network model, the traffic model, as well as AoI before formally defining AoB and AoC. 

\textit{Notation}: Common notation can be found in~\Cref{tab:notation}. Whenever necessary for clarity, the expected value operator with respect to the distribution of some random element $X$ will be denoted by $\mathbb{E}_X[\cdot]$. The spatial point process models in this work are simple point processes, meaning node positions are distinct almost everywhere. Therefore, the convention will be that a node located at position $y\in\mathbb{R}^2$ will simply be referred to as node $y$. The $\ell$-2 norm will be denoted by $\|\cdot\|$. Random elements will generally be represented with an uppercase letter, a realization of which will be represented with a lowercase letter. For example, a realization of a point process $\Phi$ is $\phi$. For some set $\mathcal{W}$, the operator $[\cdot]_k$ produces $\left[\mathcal{W}\right]_k=\left\{A\subseteq\mathcal{W}\,\text{s.t.}\,|A|=k\right\}$, the set of subsets of $\mathcal{W}$ with cardinality $k$.
\begin{table}
\centering
\rowcolors{2}{gray!25}{white}
% \begin{tabular}{@{}llr@{}} \toprule
\begin{tabular}{@{}p{0.13\linewidth}p{0.6\linewidth}@{}} \toprule
\multicolumn{2}{c}{Common notation} \\ \cmidrule(r){1-2}
\rowcolor{white}
Notation & Description \\ \midrule
$\Phi$ & \begin{tabular}[x]{@{}l@{}}Poisson Point process in $\mathbb{R}^2$ composed of\\ two independent processes $\Phi\stackrel{\Delta}{=}\Phi_N\cup\Phi_I$\end{tabular} \\
$\lambda$ & \begin{tabular}[x]{@{}l@{}}Intensity of $\Phi_I$ and $\Phi_N$ \end{tabular}\\
$b_2(x,r)$ & \begin{tabular}[x]{@{}l@{}}Disk in $\mathbb{R}^2$ centered at $x$ with radius $r$\end{tabular} \\
$H_{ij}$ & \begin{tabular}[x]{@{}l@{}}Channel fading coefficient between a \\transmitter $i$ and receiver $j$\end{tabular}\\
$\mu^{\Phi_I}_{ji}$ & \begin{tabular}[x]{@{}l@{}}Probability of successful delivery of packet\\ from $j$ to $i$ in the presence of interferers $\Phi_I$\end{tabular}\\
$\theta$ & \begin{tabular}[x]{@{}l@{}}SIR threshold value; $\theta>1$\end{tabular}\\
$p$ & \begin{tabular}[x]{@{}l@{}}medium access probability common to all \\ nodes and interferers, including the base\\ station when broadcasting\end{tabular}\\
$\beta$ & \begin{tabular}[x]{@{}l@{}}Path loss exponent\end{tabular}\\
$\mathcal{O}$ & \begin{tabular}[x]{@{}l@{}}Base station situated at the origin; $\mathcal{O}=(0,0)$\end{tabular}\\
$X_i[t]$ & \begin{tabular}[x]{@{}l@{}}Inter-packet reception duration for the packet \\reception process of receiver $i$ at time $t$\end{tabular}\\

$A_{ji}(k)$ & \begin{tabular}[x]{@{}l@{}}AoI at receiver $i$ transmitted from $j$\end{tabular}\\
$\ell(x)$ & \begin{tabular}[x]{@{}l@{}}Path loss function\\ $\ell(x)=\|x\|^{-\beta},\ x\in\mathbb{R}^2$\end{tabular}\\
 \bottomrule
\end{tabular}
    \caption{}
    \label{tab:notation}
\end{table}

\begin{figure}
    \centering
    \includegraphics[width=0.5\textwidth]{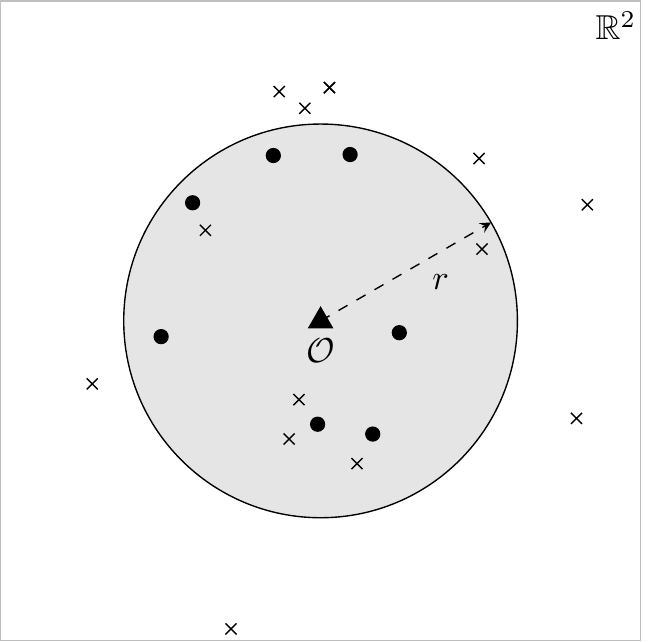}
    \caption{Example of a spatial realization $\phi_N$ of nodes (black circles), confined to a disk $b_2(0,r)$, and interferers (crosses) $\phi_I$, distributed across the Euclidean plane, with the base station (black triangle) in the center}
    \label{fig:ppp}
\end{figure}

\subsection{Network Model}\label{subsec:net_model}
Consider a base station, denoted by $\mathcal{O}$, in the Euclidean plane situated at the origin, with a finite set of nodes randomly distributed in a disk $b_2(0,r)$ of finite radius $r$. The nodes are distributed within a disk as a homogeneous Poisson Point Process with intensity $\lambda$, denoted by $\Phi_N$. Interferers are also distributed according to a homogeneous Poisson Point Process $\Phi_I$ that is distributed across $\mathbb{R}^2$ with intensity $\lambda$ (see~\Cref{fig:ppp}). We denote the combined point process of nodes and interferers by 
\begin{align}\Phi\stackrel{\Delta}{=}\Phi_N\cup\Phi_I\,.\end{align} 
This spatial model captures a scenario in which the base station may be one of many broadcast and collection nodes in a spatially-large wireless network, and where the base station is only interested in communicating with nodes within its vicinity. Each information update consists of a single, timestamped packet. When broadcasting, the base station attempts transmission of a packet to all nodes in the disk; the packet is successfully received at a receiver if the signal-to-interference ratio (SIR) exceeds a fixed threshold $\theta>1$. Similarly, during collection, the base station successfully receives a packet from a given transmitter when the SIR exceeds $\theta$.  All transmission attempts occur at the start of discrete time-slots, the packet duration and the slot length both normalized to $1$. Therefore, time $t$ is defined to be discrete, denoting the $t$\textsuperscript{th} slot. Medium access is granted to a transmitter -- including the base station when transmitting-- via an ALOHA-type random access scheme with a fixed common medium access probability (MAP) of $p$. That is, in any given time-slot the probability that a given transmitter attempts transmission is $p$, independent of all other time-slots and users in the network. In all subsequent sections we assume the packet delivery process is at steady state, having started at time $t=-\infty$.

The transmission power from every transmitter, including interferers, is fixed and normalized to 1. The wireless channel experiences Rayleigh fading and path loss attenuation. The fading loss random variable $H$ is i.i.d. exponentially distributed with mean 1. For a transmission from a transmitter $x$ to a receiver $y$, the path loss is defined to be $$\ell(x-y)\stackrel{\Delta}{=}\|x-y\|^{-\beta}\,.$$

The path loss exponent $\beta$ is generally chosen to be in the interval $(2,4)$. At time-slot $t$ the medium access indicator random variable $Z_x[t]$ is 1 if a transmitter $x$ attempts transmission and 0 otherwise. Given the realization of node and interferer locations $\phi$ and including medium access probability, transmission power, fading, and path loss, we may represent the signal power observed at receiver $y$ for a broadcast from the base station to be $$S^{\phi_I}_{\mathcal{O}y}[t]=Z_\mathcal{O}[t] H_{\mathcal{O}y}[t]\ell(y).$$ Similarly, the interference observed at $y$ is given by $$I^{\phi_I}_{\mathcal{O}y}[t]=\sum_{x\in\phi_I}Z_x[t] H_{xy}[t]\ell(x-y)\,.$$ Therefore, the SIR is given by the ratio of $S^\phi_{\mathcal{O}y}[t]$ and $I^\phi_{\mathcal{O}y}[t]$,
\begin{align}\label{eq:sir_b}
    SIR^{\phi_I}_{\mathcal{O}y}[t]=\frac{S^{\phi_I}_{\mathcal{O}y}[t]}{I^{\phi_I}_{\mathcal{O}y}[t]}=\frac{Z_\mathcal{O}[t]H_{\mathcal{O}y}[t]\ell(y)}{\sum_{x\in\phi_I}Z_x[t] H_{xy}[t]\ell(x-y)}\,.
\end{align}

For collection, the transmission signal from a transmitter $x$ in $\phi_N$ is subject to interference from both the field of interferers as well as other transmitters in $\phi_N$. Therefore, the SIR is
\begin{align*}
    SIR^{\phi}_{x\mathcal{O}}[t]=\frac{Z_x[t]H_{x\mathcal{O}}[t]\ell(x)}{\sum_{y\in\phi\setminus x}Z_y[t] H_{y\mathcal{O}}[t]\ell(y)}\,.
\end{align*}

In the following subsection, we formally define the AoI metric, which will then be used to define AoB and AoC.

\subsection{Age of Information}
AoI is denoted by $A[t]$. Let $G[t]$ be the time stamp of the most recent packet successfully received as of time $t$. The time evolution of AoI is then defined in~\Cref{eq:aoi}:
\begin{alignat}{1}\label{eq:aoi}
A[t+1]=\begin{cases}A[t]+1,\ &\text{if no reception} \\ \min\{t-G[t],\,A[t]\}+1,\ &\text{if reception}\end{cases}\,.
\end{alignat}

The AoI at a receiver $x$ and at time $t$ corresponding to information updates from some node $y$ is denoted by $A_{yx}[t]$.

\begin{figure}
    \centering
    \includegraphics[width=0.48\textwidth]{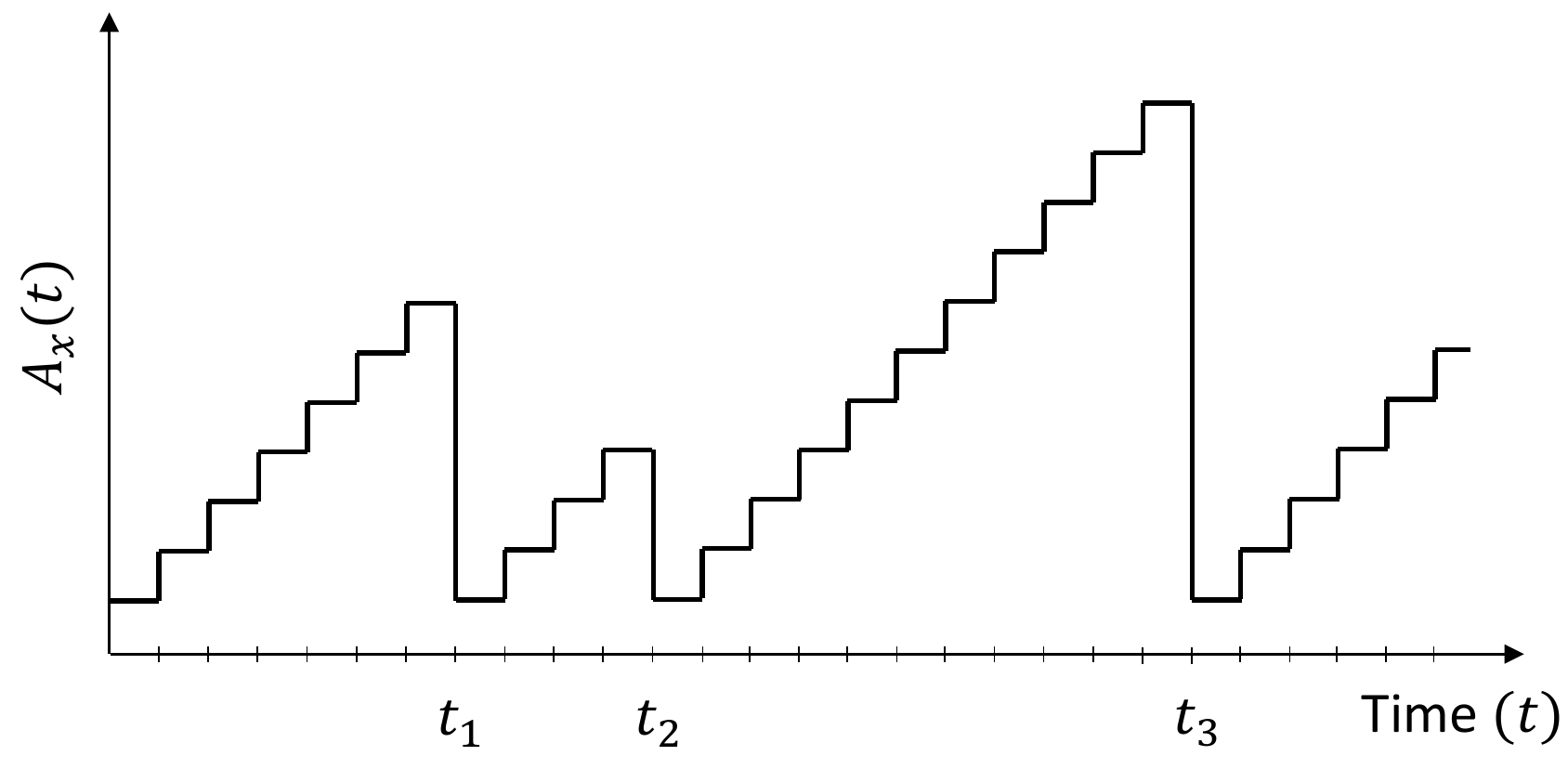}
    \caption{Age evolution over discrete time-slots}
    \label{fig:age_evolution}
\end{figure}

We assume any information source node can generate an update at-will, i.e. at each time $t$, an information update packet is instantaneously generated and transmitted with probability $p$. Therefore, the time stamp associated with a packet transmitted at time $t$ will always be $t$, and~\Cref{eq:aoi} becomes
\begin{alignat}{1}\label{eq:aoi_mod}
A[t+1]=\begin{cases}A[t]+1,&\text{if no reception} \\ 1,&\text{if reception}\end{cases}\,.
\end{alignat}
An example of the age evolution over time is provided in~\Cref{fig:age_evolution}. 

\subsection{AoB and AoC}
Having defined AoI and the network model, we now define Age of Broadcast and Age of Collection.
\subsubsection{Age of Broadcast}\label{subsubsec: b age}
The AoB $B_\mathcal{O}^{\phi_N}[t]$ with respect to some realization of the receiver locations $\phi_N$ and base station $\mathcal{O}$ at time-slot $t$ is defined as 
\begin{align}
    B_\mathcal{O}^{\phi}[t]\stackrel{\Delta}{=}\max_{i\in\phi_N}A^{\phi_I}_{\mathcal{O} i}[t]\,.
\end{align}
Given the base station begins broadcasting at $t-B_\mathcal{O}^{\phi}[t]$, the time until all receivers get an update cannot be less than $B_\mathcal{O}^{\phi}[t]$. Moreover, at least one base station has an update that is no greater than $t-B^{\phi}_{\mathcal{O}}[t]$

% This definition of AoB is consistent with the following observation. The time it takes to broadcast an update is equivalent to the time needed for all nodes in the receiver set to receive at least one packet. If all but one remaining node have received a packet by time $t$, then the remaining node's AoI must be the largest.
% , reaching its peak value the time-slot just before successful reception of a packet.
% If the packet transmission and reception process is stationary, then in expectation the AoB as defined above is exactly equal to the average time it takes to broadcast an update to all nodes, an intuition that will be addressed more rigorously in~\Cref{subsec:bd}. 

% \begin{figure}
%     \centering
%     \includegraphics[width=\linewidth]{graphics/age_stair_fig.png}
%     \caption{The average broadcast delay  coincides with the average of $\mathbb{E}\left[\max_{i\in\phi_N}\Tilde{X}_i\right]$ at time $t$, the time elapsed since the most recent packet arrival}
%     \label{fig:reverse_process}
% \end{figure}

\subsubsection{Age of Collection}\label{subsubsec: c_age}
The AoC $C_{\mathcal{O}}^{\phi}[t]$ with respect to the base station $\mathcal{O}$ and the realization of node and interferer locations $\phi$ at time $t$ is defined as
\begin{align}
    C_{\mathcal{O}}^{\phi}[t]\stackrel{\Delta}{=}\max_{j\in\phi_N}A^{\phi}_{j \mathcal{O}}[t]\,.
\end{align}

For a given time $t$, the base station will have received at least one update from all but one transmitter since $t-C^{\phi}_{\mathcal{O}}[t]$.
%  By a similar line of reasoning as outlined above for AoB, the definition of AoC is consistent with the observation that if all but one transmitter have successfully delivered a packet to the base station, then the age associated with the remaining transmitter's update process is the largest. Just as in the broadcast case, if the transmitter set is empty, the AoC is 0 for all time.

% \begin{figure}
%     \centering
%     \includegraphics[width=0.48\textwidth]{}
%     \caption{The path loss $\ell(x)$ is bounded}
%     \label{fig:bpl}
% \end{figure}

In both broadcast and collection settings, we adopt the convention that if $\phi_N=\emptyset$, then AoB and AoC is 0 for all time. Having defined AoB and AoC, we establish preliminary results that are used in subsequent sections to analyze AoB and AoC.

% \endgroup

\section{Preliminaries}\label{sec:preliminaries}
When broadcasting, the probability the base station successfully delivers an update to an arbitrary receiver $y$  given the locations of the interferer positions $\phi_I$ is determined by the medium access probability and the channel characteristics. Since transmission attempts from transmitters and interferers alike are i.i.d. and Bernoulli in each time slot, the probability of successful delivery to $y$ is time-invariant and the time index $t$ can be dropped. Given the spatial realization of interferers $\phi_I$, the success probability is given by,
\begin{align}
    \mu^{\phi_I}_{\mathcal{O}y}=\mathbb{P}\left(SIR_{\mathcal{O}y}>\theta\right)\,.
\end{align}
% which is similar to the transmission success probability given in~\cite{yang_optimizing_2020} Equation (4).

Averaging over the channel fading and the random access, and given the spatial realization $\phi_I$, the conditional reception success probability at a receiver $y$ is given by
\begin{align}
    \mu_{\mathcal{O}y}^{\phi_I}&=\mathbb{P}\left(SIR^{\phi_I}_{\mathcal{O}y}>\theta\,\big|\,\Phi_I=\phi_I\right)\\&=p\prod_{x\in\phi_I}\left(1-\frac{p}{1+\theta\frac{\ell(y)}{\ell(x-y)}}\right)\,.\label{eq:lemma_succ}
\end{align}
The derivation of~\Cref{eq:lemma_succ} is omitted due to space constraints but is similar to the analysis in~\cite{yang_optimizing_2020} Lemma 1.
Through an identical line of reasoning, the conditional success probability during collection with respect to transmitter $x\in\phi_N$ is given by
\begin{align}\label{eq:coll_cond_succ}
    \mu^\phi_{x\mathcal{O}}=p\prod_{y\in\phi\setminus x}\left(1-\frac{p}{1+\theta\frac{\ell(x)}{\ell(y)}}\right)\,.
\end{align}

where the sources of interference are both $\phi_I$ and $\phi_N\setminus x$; thus success probability is conditioned on $\phi$ instead of $\phi_I$.

Note that $\mu^{\phi_I}_{\mathcal{O}y}$ and $\mu^{\phi}_{x\mathcal{O}}$ are dependent on the realization $\phi$. Thus, when not given $\phi$, the reception success probability is a random variable. 

Next, we de-condition~\Cref{eq:lemma_succ} and~\Cref{eq:coll_cond_succ} on $\Phi_I$ by taking the average over all realizations of the interferer locations. The packet reception success probability from the base station $\mathcal{O}$ to a receiver $y$ is then given by
\begin{align}\label{eq:av_succ}
    \mu(\|y\|)= p\exp\left(-p\lambda\pi C \|y\|^2\right)\,,
\end{align}
where
\begin{align}
 C\triangleq\Gamma(1+\delta)\Gamma(1-\delta)\theta^\delta\label{eq:c_const}
\end{align}
and the gamma function  $\Gamma(\cdot)$ is defined as ${\Gamma(x)\stackrel{\Delta}{=}\int_0^\infty t^{x-1}e^{-t}\,dt}$, and $\delta=\frac{2}{\beta}$.
% \begin{proof}
% Proof can be found in the appendix, ~\Cref{app:success}.
% \end{proof}

The proof is omitted for brevity, but is similar to the analysis in~\cite{haenggi_interference_2008} (see Section 3.2.3).
% Given the piecewise nature of the path loss function $\ell(\cdot)$, the 
Note that $\mu$ no longer depends explicitly on node or interferer geometry and is instead only a function of the distance between $\mathcal{O}$ and $y$. Thus, in the instance-independent analysis we will express the success probability purely as a function of distance between the base station and the node. 

Conditioned on $\Phi_N$, we find the success probability in the collection case de-conditioning on $\Phi_I$ to be
\begin{align}
    \mu^{\phi_N}_{y\mathcal{O}}&=\mathbb{P}\left(H_{y\mathcal{O}}\geq \frac{\theta\left(I^{\phi_I}_{y\mathcal{O}}+I^{\phi_N\setminus\{y\}}_{y\mathcal{O}}\right)}{\ell(y)}\right)\\
    &=\mathbb{E}\left[\exp\left(-\frac{\theta I^{\phi_I}_{y\mathcal{O}}}{\ell(y)}\right)\right]\cdot \mathbb{E}\left[\exp\left(-\frac{\theta I^{\phi_N\setminus\{y\}}_{y\mathcal{O}}}{\ell(y)}\right)\right]\\
    &=p\exp\left(-p\lambda\pi C \|y\|^2\right)\cdot\prod_{j\in\phi_N\setminus\{y\}}\left(1-\frac{p}{1+\theta\frac{\ell(y)}{\ell(j)}}\right)\\
    &=\mu(\|y\|)\cdot\prod_{j\in\phi_N\setminus\{y\}}\left(1-\frac{p}{1+\theta\frac{\ell(y)}{\ell(j)}}\right)\,,
\end{align}
where $I^{\phi_N\setminus\{y\}}_{y\mathcal{O}}$ denotes the interference induced by the transmission of the nodes in $\phi_N\setminus\{y\}$.
Having established packet reception probabilities results in both instance-dependent and instance-independent cases, we leverage this insight in deriving AoB and AoC in the following sections, starting with broadcast.

% In this section we characterize the Expected AoB (EAoB) and Expected AoC (EAoC), the expectation taken with respect to the ALOHA network traffic. In the instance-dependent case, we analyze EAoB and EAoC given perfect knowledge of node locations and $\phi_I$. In the instance-independent case, node and interferer locations are unknown, so we find EAoB and EAoC in expectation over the node and interferer point processes.  

\section{Broadcast}\label{sec:broadcast}
In this section we characterize the Expected AoB (EAoB), the expectation taken with respect to the ALOHA network traffic. In the instance-dependent case, we analyze EAoB given perfect knowledge of node and interferer locations. In the instance-independent case, node and interferer locations are unknown, so we find EAoB in expectation over the node and interferer point processes.

\subsection{Instance-dependent (BD)}\label{subsec:bd}
We assume the locations of all interferers and receivers are known. Interferers are located according to a realization of $\Phi_I$, denoted $\phi_I$. Receivers are also distributed according to a realization of $\Phi_N$ and is denoted $\phi_N$. Recall that the network is at steady state, having started at $t=-\infty$. Therefore, the AoB process is stationary and EAoB, defined as $\mathbb{E}\left[B^{\phi}_{\mathcal{O}}[t]\right]$, is the same for all finite $t$ and the dependence on time can be dropped to give $\mathbb{E}\left[B^{\phi}_\mathcal{O}\right]$.

\begin{figure}
\centering
     \begin{subfigure}[b]{0.5\textwidth}
         \centering
         \includegraphics[width=\textwidth]{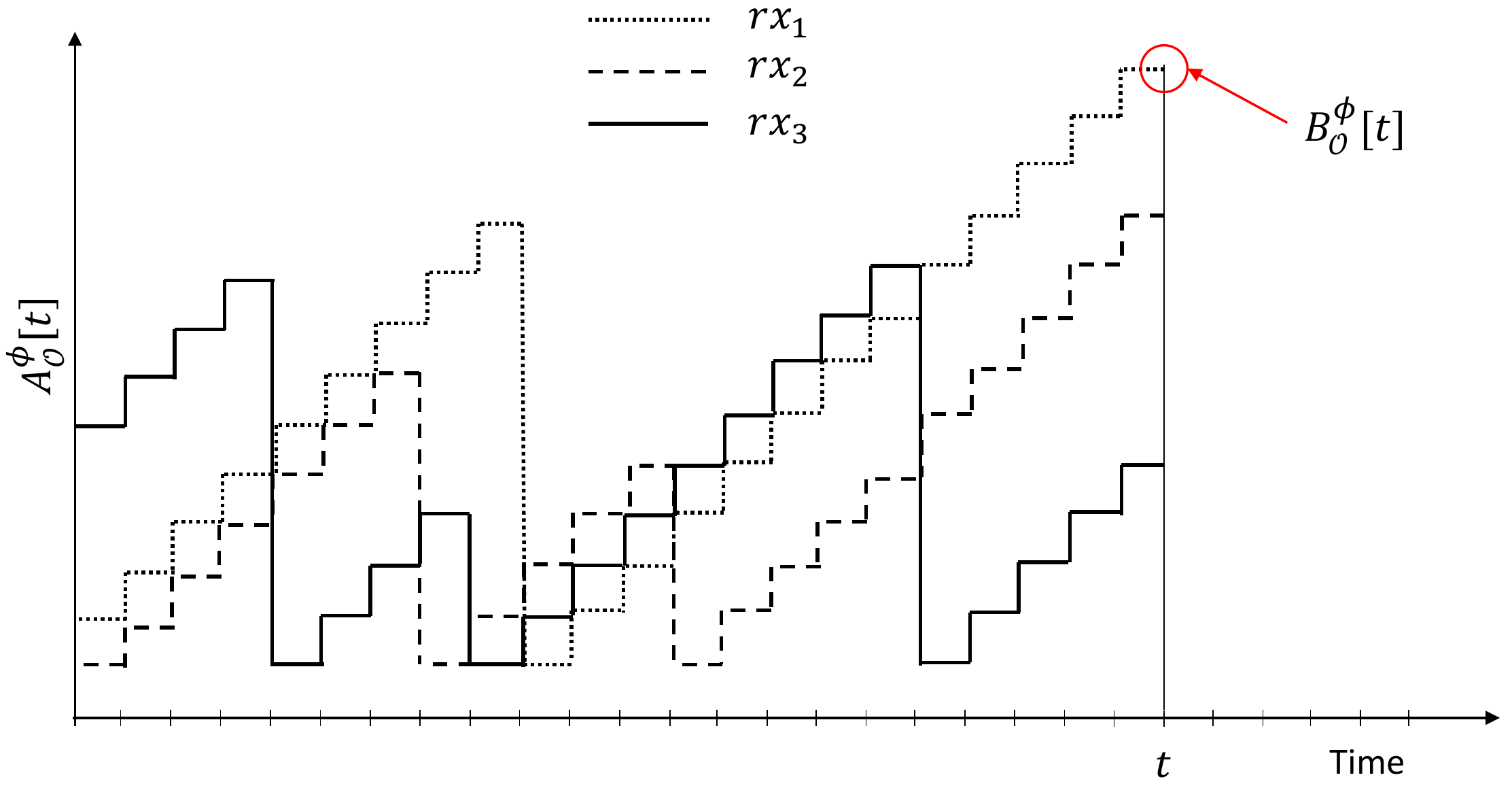}
         \caption{}
         \label{fig:3_pkt_rcp}
     \end{subfigure}
    %  \hfill
     \begin{subfigure}[b]{0.5\textwidth}
         \centering
         \includegraphics[width=\textwidth]{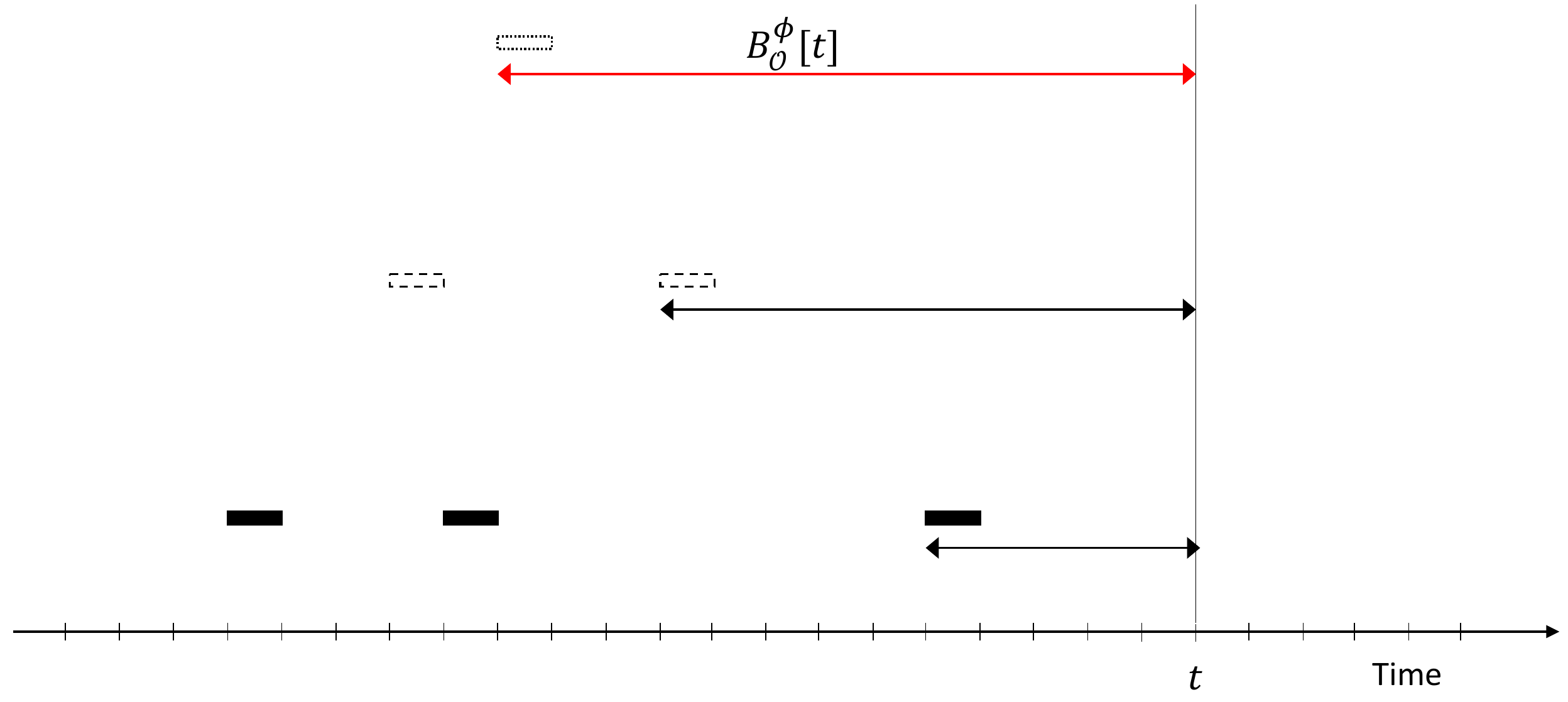}
         \caption{}
         \label{fig:forward_process}
     \end{subfigure}
    %  \hfill
     \begin{subfigure}[b]{0.5\textwidth}
         \centering
         \includegraphics[width=\textwidth]{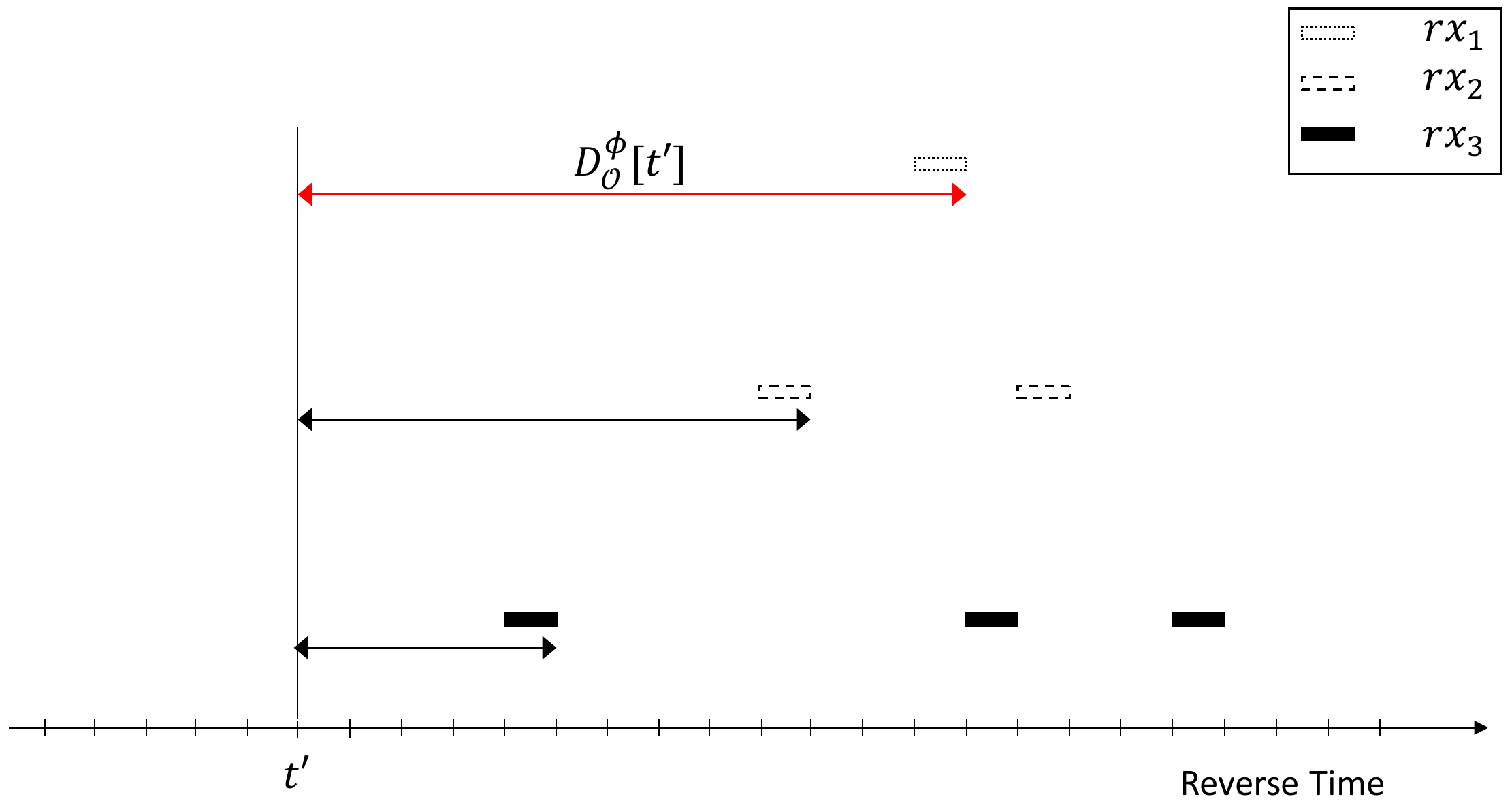}
         \caption{}
         \label{fig:reverse_process}
     \end{subfigure}
        \caption{An illustrative example supporting~\Cref{cl:br_delay}, where (a) is the age evolution of three receivers that form the set $\phi_N$, (b) describes the packet arrival process up to time $t$, and (c) is the reverse of the process in (b).}
        \label{fig:three graphs}
\end{figure}

To determine the EAoB, it is helpful to connect average broadcast age to the average broadcast delay, a related-yet-distinct metric~\cite{xie2013network}. For each receiver  $y\in\phi_N$, define $X^{\phi_I}_{\mathcal{O}y}[t]$ to be the time elapsed until successful reception of the next packet at receiver $i$ since time $t$. Broadcast delay $D^{\phi
}_\mathcal{O}[t]$ is then the time elapsed from time $t$ until the time at which all receivers in $\phi_N$ have received the next packet. That is,
\begin{align}\label{eq:bdelay}
    D^{\phi}_\mathcal{O}[t]\triangleq\max_{y\in\phi_N}X^{\phi_I}_{\mathcal{O}y}[t]
\end{align}

Since the packet reception process is stationary -- by virtue of the ALOHA random access and i.i.d. fading --  the average broadcast delay $\mathbb{E}\left[D^{\phi}_\mathcal{O}[t]\right]$ is the same for all finite $t$ and the dependence on $t$ can be dropped to give $\mathbb{E}\left[D^{\phi}_{\mathcal{O}}\right]$. We now reason that the average broadcast delay is equivalent to EAoB. This is evident by observation of the reverse packet reception process (see~\Cref{fig:three graphs}). 
Consider the instantaneous AoB at time $t$, which is the maximum AoI in $\phi_N$, and is exactly the time that elapsed between the current time point and the first time point in the past since which all receivers in the receiver set have successfully received at least one information update (see~\Cref{fig:forward_process}). Now consider the broadcast delay, given by~\Cref{eq:bdelay} and shown in~\Cref{fig:reverse_process}. Since the packet reception process's evolution in the forward direction is identical in distribution to that of its reverse process looking back in time, we may conclude the average broadcast delay must be equal to the EAoB. The formal claim is outlined below.
    
\begin{claim}\label{cl:br_delay}
 
\begin{align}\mathbb{E}\left[B^{\phi}_\mathcal{O}\right] =\mathbb{E}\left[D^{\phi}_\mathcal{O}\right]\,.\end{align}
\end{claim}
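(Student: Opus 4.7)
My plan is to formalize the time-reversal intuition sketched in the text. Fix the realization $\phi = \phi_N \cup \phi_I$ and, for each receiver $y \in \phi_N$, let $N_y[t] \in \{0,1\}$ denote the indicator that the base station's broadcast is successfully received at $y$ in slot $t$. Because $N_y[t]$ is a deterministic function of the slot-$t$ access indicators $(Z_\mathcal{O}[t], \{Z_x[t]\}_{x \in \phi_I})$ and fading variables $(H_{\mathcal{O}y}[t], \{H_{xy}[t]\}_{x \in \phi_I})$, and all of these are independent across time slots, the vector-valued reception process $\mathbf{N}[t] := (N_y[t])_{y \in \phi_N}$ is i.i.d.\ in $t$. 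In particular its law is invariant under time reversal, i.e.\ $(\mathbf{N}[t])_{t \in \mathbb{Z}} \stackrel{d}{=} (\mathbf{N}[-t])_{t \in \mathbb{Z}}$.

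The next step is to rewrite the AoB and the broadcast delay at an arbitrary reference time, taken to be $t = 0$ by stationarity, as symmetric functionals of $\mathbf{N}$. Define the backward and forward hitting distances
\begin{align*}
    L_y := -\max\{s < 0 : N_y[s] = 1\}, \qquad F_y := \min\{s > 0 : N_y[s] = 1\},
\end{align*}
which are almost surely finite because $\mu^{\phi_I}_{\mathcal{O}y} > 0$ for almost every realization $\phi_I$ and $\phi_N$ is finite. Up to the discrete-time convention fixed by~\eqref{eq:aoi_mod} and~\eqref{eq:bdelay}, these coincide with $A^{\phi_I}_{\mathcal{O}y}[0]$ and $X^{\phi_I}_{\mathcal{O}y}[0]$, so $B^{\phi}_\mathcal{O}[0] = \max_{y \in \phi_N} L_y$ and $D^{\phi}_\mathcal{O}[0] = \max_{y \in \phi_N} F_y$. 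The reversal map $\mathbf{N}[t] \mapsto \mathbf{N}[-t]$ simultaneously interchanges $L_y$ and $F_y$ for every $y$, so the distributional invariance of $\mathbf{N}$ under reversal gives $\max_y L_y \stackrel{d}{=} \max_y F_y$. Taking expectations (first over the traffic randomness, conditional on $\phi$) then yields $\mathbb{E}[B^{\phi}_\mathcal{O}] = \mathbb{E}[D^{\phi}_\mathcal{O}]$.

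I expect no substantive obstacle: the crux is that the joint reception process is i.i.d.\ across slots, which is immediate from the i.i.d.\ ALOHA access and i.i.d.\ Rayleigh fading. The only delicate point is bookkeeping the discrete-time convention so that $L_y$ and $F_y$ agree with the per-receiver AoI and waiting time used in~\eqref{eq:aoi_mod} and~\eqref{eq:bdelay}; any remaining offset shifts the two sides by the same deterministic constant (uniformly in $y$) and therefore does not affect the equality of expectations.
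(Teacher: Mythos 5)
Your proof is correct and follows essentially the same route as the paper's: both arguments rest on the time-reversal invariance of the per-slot reception process (which is i.i.d.\ across slots by the ALOHA access and i.i.d.\ fading) and the observation that the AoB of the forward process equals the broadcast delay of the reversed process. The only difference is cosmetic --- the paper packages the reversibility step as a degenerate Markov chain satisfying Kolmogorov's criterion, whereas you invoke the i.i.d.\ structure directly and make the identification of the backward/forward hitting times explicit, which is if anything a cleaner formalization of the same idea.
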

\begin{proof} See Appendix~\Cref{app:br_delay_pf}. Refer to~\Cref{fig:three graphs} for an intuitive illustration of the proof.
\end{proof}

This equivalence between the average broadcast delay and average AoB facilitates an analytical derivation of expected AoB. We begin by first defining the joint distribution of packet reception at each time $t$. Assuming knowledge of the locations of all nodes in $\phi_I$ and ${\phi_N}$, at time slot $t$, the joint distribution of packet reception for all the receivers, i.e. the joint distribution of $\left\{\mathbbm{1}^{\phi_I}_{\mathcal{O}i}[t]\right\}_{i\in\phi_N}$, the packet reception indicator random variables, can be obtained. Using this joint distribution, it is possible to determine the average broadcast age explicitly.

% We partition the receivers into $\Xi=\{i\in\phi_N\,|\,\mathbbm{1}_{\mathcal{O}i}[t]=1\}$ and $\Psi=\{j\in\phi_N\,|\,\mathbbm{1}_{\mathcal{O}j}[t]=0\}$, the set of receivers that successfully received a packet at time $t$ and the set that did not receive a packet, respectively. Defining the probability of the set of receivers $\mathcal{R}$ all successfully receiving a packet at time slot $t$ as the following,

We partition the receivers into $\Xi[t]=\{i\in\phi_N\,|\,\mathbbm{1}_{\mathcal{O}i}[t]=1\}$ and $\Psi[t]=\{j\in\phi_N\,|\,\mathbbm{1}_{\mathcal{O}j}[t]=0\}$, the set of receivers that successfully received a packet at time $t$ and the set that did not receive a packet, respectively. Defining the probability of the set of receivers $\mathcal{R}$ all successfully receiving a packet at time slot $t$ as the following,
\begin{align}
    \mu_{\mathcal{O}\mathcal{R}}^{\phi_I}[t]&=\mathbbm{P}\left(\cap_{i\in\mathcal{R}}\left\{SIR^{\phi_I}_{\mathcal{O}i}[t]>\theta\right\}\right)\\
    &=\mathbbm{P}\left(\cap_{i\in\mathcal{R}}\left\{\mathbbm{1}^{\phi_I}_{\mathcal{O}i}[t]=1\right\}\right)\, ,
\end{align}
and conversely $w^{\phi_I}_{\mathcal{O}\mathcal{R}}[t]$ as the probability of the set of receivers $\mathcal{R}$ NOT receiving a packet at time slot $t$ as
\begin{align}
    w^{\phi_I}_{\mathcal{O}\mathcal{R}}[t]&=\mathbbm{P}\left(\cap_{i\in\mathcal{R}}\left\{SIR^{\phi_I}_{\mathcal{O}i}[t]\leq\theta\right\}\right)\\
    &=\mathbbm{P}\left(\cap_{i\in\mathcal{R}}\left\{\mathbbm{1}^{\phi_I}_{\mathcal{O}i}[t]=0\right\}\right)\, .
\end{align}
 We next determine the probability rule of $\left\{\mathbbm{1}^{\phi_I}_{\mathcal{O}i}[t]\right\}_{i\in\phi_N}$, which is equivalent to finding the probability rule for $\Xi[t]$ without loss of generality. By the inclusion-exclusion property, the probability rule for $\Xi[t]$ can be established. 
 
 The inclusion-exclusion principle formula represents the probability of the union of a set of events ${A=\left\{A_1,\,A_2,\,\hdots,\,A_n\right\}}$ by the probabilities of intersections of its subsets: 
 \begin{align}
     \mathbb{P}\left(\cup_{i= 1}^n A_i\right)=\sum_{j=1}^n (-1)^{n-1}\sum_{L\in[A]_k}\mathbb{P}\left(\cap_{i\in L}A_i\right)\,,
 \end{align}
where $[A]_k$ denotes the set of subsets of $A$ with cardinality $k$.
  Therefore, the probability rule for $\Xi[t]$ is
 \begin{align}
       p\left(\Xi[t]\right)&=\mathbb{P}\left(\left\{\cap_{i\in \Xi}\mathbbm{1}^{\phi_I}_{\mathcal{O}i}[t]=1\right\}\bigcap \left\{\cap_{j\in \Psi}\mathbbm{1}^{\phi_I}_{\mathcal{O}j}[t]=0\right\}\right)\\
       &=\mathbb{P}\left(\cap_{i\in\Xi}\left\{\mathbbm{1}^{\phi_I}_{\mathcal{O}i}[t]=1\right\}\right)\\
       &\quad-\mathbb{P}\left(\cup_{j\in\Psi}\left\{\mathbbm{1}^{\phi_I}_{\mathcal{O}j}[t]=1\right\}\cap\left\{\cap_{i\in\Xi}\mathbbm{1}^{\phi_I}_{\mathcal{O}i}[t]=1\right\}\right)\\
&=\mu^{\phi_I}_{\mathcal{O}\Xi}[t]+\sum_{k=1}^{|\Psi|}(-1)^k\sum_{L\in [\Psi]_k}\mu_{\mathcal{O}\{\Xi\cup L\}}^{\phi_I}[t]\,.
 \end{align}

Without loss of generality, we may define the probability rule starting at time $t=0$ for broadcast delay $D^{\phi}_{\mathcal{O}}[0]$. For the broadcast delay to be $\tau$, at least one receiver must receive a packet at time $t=\tau-1$. Moreover, of all the receivers that received a packet at time $\tau-1$, at least one must have received no packets for times $t\in\{0,\,\hdots,\,\tau-2\}$, i.e. given non-empty $\Xi[\tau-1]$, there exists some non-empty subset $J\subseteq{ \Xi[\tau-1]}$ such that $J\subseteq\cap_{i=0}^{\tau-2}\Psi[i]$. Therefore, $p^\phi_D(\tau)$, the probability that the broadcast delay is equal to $\tau$, is given by
\begin{align}
   p_{D}^\phi(\tau)&=\sum_{\Xi\in 2^{\Phi_N}\setminus\{\emptyset\}}p(\Xi[\tau-1])\\&\cdot\underbrace{\mathbb{P}(\cup_{J\in 2^{\Xi[\tau-1]}\setminus\{\emptyset\}}\{J\subseteq\cap_{i=0}^{\tau-2}\Psi[i]\})}_{(*)}\\
    &=\sum_{\Xi\in 2^{\Phi_N}\setminus\{\emptyset\}}p(\Xi[\tau-1])\\
    &\quad\cdot\underbrace{\left(\sum_{n=1}^{|\Xi|}(-1)^{n+1}\sum_{J\in[\Xi]_n}\prod_{u=0}^{\tau-2} w^{\phi_I}_{\mathcal{O}J}[u]\right)}_{\text{(\textdagger)}}\,,
\end{align}

where $2^{\{\cdot\}}$ denotes the power set of some set $\{\cdot\}$, and (\textdagger) is the inclusion-exclusion formula applied to $(*)$.  Finally, we use~\Cref{cl:br_delay} to find EAoB given by
\begin{align}
    \mathbb{E}\left[B^{\phi}_\mathcal{O}[t]\right] &=\mathbb{E}\left[D^{\phi}_\mathcal{O}\right]\\
    &=\mathbb{E}\left[D^{\phi}_\mathcal{O}[0]\right]=\sum_{k=1}^\infty k\cdot p_D^\phi(k)\,.
\end{align}

While this representation of average broadcast age is complete, a more intuitive characterization can be developed in the form of bounds on the average broadcast delay. We begin with an empirical observation. Based on simulation results, we observe that the average broadcast age is bounded above by the average broadcast age of an alternate packet reception process in which all packet reception indicators were independent random variables, albeit preserving the same distribution (see~\Cref{fig:associated}). Recall that $X^{\phi_I}_{\mathcal{O}i}[t]$ denotes the time elapsed since time $t$ until the next packet reception. The conjecture based on this observation is formalized as follows.

\begin{conjecture} The EAoB is bounded above by
\begin{align*}
    \mathbb{E}\left[B^{\phi}_{\mathcal{O}}\right]\leq\mathbb{E}\left[\Tilde{D}^{\phi}_{\mathcal{O}}\right]=\mathbb{E}\left[\max_{i\in\phi_N}\Tilde{X}^{\phi_I}_{\mathcal{O}i}[0]\right]\,,
\end{align*}
 where $\Tilde{X}^{\phi_I}_{\mathcal{O}i}[0]\stackrel{d}{=}X^{\phi_I}_{\mathcal{O}i}[0]$ and $\left\{\Tilde{X}^{\phi_I}_{\mathcal{O}i}[0]\right\}_{i\in\phi_N}$ are independent.
\end{conjecture}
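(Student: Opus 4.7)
The plan is to reduce the conjecture to a stochastic dominance statement for the per-receiver waiting times $\{X^{\phi_I}_{\mathcal{O}i}[0]\}_{i\in\phi_N}$. By~\Cref{cl:br_delay} together with~\Cref{eq:bdelay}, $\mathbb{E}[B^{\phi}_\mathcal{O}] = \mathbb{E}[D^{\phi}_\mathcal{O}] = \mathbb{E}[\max_{i\in\phi_N} X^{\phi_I}_{\mathcal{O}i}[0]]$, so it suffices to show that this maximum is stochastically dominated by the analogous maximum under an \emph{independent} coupling with the same marginals. The driving intuition is that the true waiting times are positively dependent: an interferer that transmits in some slot inflates the interference seen by \emph{every} receiver simultaneously, so slots in which one receiver fails to decode tend to be slots in which the others do too. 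I would formalise this via positively associated random variables and the Esary--Proschan--Walkup inequality.

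Concretely, I would first express $X^{\phi_I}_{\mathcal{O}i}[0]$ as a deterministic function $F_i$ of the mutually independent primitives $\{Z_\mathcal{O}[t]\}_{t\geq 1}$, $\{Z_x[t]\}_{x\in\phi_I,\,t\geq 1}$, $\{H_{\mathcal{O}i}[t]\}_{t\geq 1}$, and $\{H_{xi}[t]\}_{x\in\phi_I,\,t\geq 1}$. From~\Cref{eq:sir_b}, the reception indicator $\mathbbm{1}^{\phi_I}_{\mathcal{O}i}[t]$ is coordinate-wise non-decreasing in $Z_\mathcal{O}[t]$ and $H_{\mathcal{O}i}[t]$ and non-increasing in each $Z_x[t]$ and $H_{xi}[t]$, and $X^{\phi_I}_{\mathcal{O}i}[0]$, being the first index at which that indicator equals one, inherits the opposite monotonicity in every coordinate. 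Replacing $Z_\mathcal{O}[t]\mapsto 1-Z_\mathcal{O}[t]$ and $H_{\mathcal{O}i}[t]\mapsto -H_{\mathcal{O}i}[t]$ produces a modified map $\tilde F_i$ of still-independent transformed primitives in which every input enters non-decreasingly. The only primitives shared across distinct receivers are $\{Z_\mathcal{O}[t]\}$ and $\{Z_x[t]\}_{x\in\phi_I}$; on these the required direction of monotonicity is identical for every $i$, so the single sign-flip of $Z_\mathcal{O}[t]$ is compatible across receivers, while the receiver-specific fadings may be flipped individually without conflict.

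By the Esary--Proschan--Walkup theorem, any collection of coordinate-wise non-decreasing functions of a common independent input is positively associated, so $\{X^{\phi_I}_{\mathcal{O}i}[0]\}_{i\in\phi_N}$ is an associated family. A standard consequence of association is that, for every $\tau\in\mathbb{N}$,
\begin{align*}
\mathbb{P}\!\left(\max_{i\in\phi_N} X^{\phi_I}_{\mathcal{O}i}[0]\leq\tau\right)
&=\mathbb{E}\!\left[\prod_{i\in\phi_N}\mathbbm{1}_{\{X^{\phi_I}_{\mathcal{O}i}[0]\leq\tau\}}\right] \\
&\geq \prod_{i\in\phi_N}\mathbb{P}\!\left(X^{\phi_I}_{\mathcal{O}i}[0]\leq\tau\right),
\end{align*}
where the inequality applies to the non-increasing (in each $X_i$) family of indicators. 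The right-hand side equals $\mathbb{P}(\max_{i}\tilde X^{\phi_I}_{\mathcal{O}i}[0]\leq\tau)$ by independence of the tilded copies, so $\max_i X^{\phi_I}_{\mathcal{O}i}[0]$ is stochastically dominated by $\max_i \tilde X^{\phi_I}_{\mathcal{O}i}[0]$. Integrating the resulting tail inequality and invoking~\Cref{cl:br_delay} delivers the conjecture.

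The step I expect to be the most delicate is handling the infinite time horizon: $\tilde F_i$ depends on an infinite collection of primitives, whereas Esary--Proschan--Walkup is classically phrased for finitely many inputs. The standard remedy is to truncate — define $X^{\phi_I,(N)}_{\mathcal{O}i}[0]=\min\{X^{\phi_I}_{\mathcal{O}i}[0], N\}$, which depends only on primitives from slots $1,\dots,N$ and so is a non-decreasing function of finitely many independent inputs after the sign-flips — apply the finite-input version to $(X^{\phi_I,(N)}_{\mathcal{O}i}[0])_i$, and pass $N\to\infty$ by monotone convergence to recover the CDF inequality for the untruncated family. Finiteness of the expectations is automatic: conditional on $\phi$, each $X^{\phi_I}_{\mathcal{O}i}[0]$ is geometric with parameter $\mu^{\phi_I}_{\mathcal{O}i}>0$ from~\Cref{eq:lemma_succ}, and $|\phi_N|<\infty$ almost surely on the bounded disk, so both $\mathbb{E}[\max_i X^{\phi_I}_{\mathcal{O}i}[0]]$ and $\mathbb{E}[\max_i \tilde X^{\phi_I}_{\mathcal{O}i}[0]]$ are finite and the tail inequality integrates to the desired inequality on means.
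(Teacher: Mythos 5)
Your proposal addresses a statement that the paper does \emph{not} prove: it is stated explicitly as a conjecture, supported only by the simulation evidence in the figure comparing the true broadcast delay to its independent-coupling counterpart. Your argument, as far as I can check it, is a correct proof and in fact yields the stronger conclusion that $\max_i X^{\phi_I}_{\mathcal{O}i}[0]$ is stochastically dominated by $\max_i \Tilde{X}^{\phi_I}_{\mathcal{O}i}[0]$, not merely dominated in expectation. The key structural observation is sound: conditional on $\phi$, each waiting time is a coordinate-wise monotone function of the mutually independent primitives $\{Z_\mathcal{O}[t]\}$, $\{Z_x[t]\}$, $\{H_{\mathcal{O}i}[t]\}$, $\{H_{xi}[t]\}$; the only primitives shared across receivers ($Z_\mathcal{O}[t]$ and the interferers' $Z_x[t]$) enter every $X^{\phi_I}_{\mathcal{O}i}[0]$ with the \emph{same} monotonicity direction, so a single global sign convention renders all the maps simultaneously non-decreasing, and Esary--Proschan--Walkup gives association. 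The inequality $\mathbb{P}(\cap_i\{X_i\leq\tau\})\geq\prod_i\mathbb{P}(X_i\leq\tau)$ for associated families then points in the right direction (positive dependence shrinks the maximum), and your truncation to slots $1,\dots,N$ correctly reduces the infinite-horizon objection to the finite-input version of the theorem, since $\{X^{\phi_I}_{\mathcal{O}i}[0]\leq\tau\}$ depends only on the first $\tau$ slots. Two small points to make explicit in a write-up: (i) the edge case where the denominator of the SIR vanishes should be handled by the convention that no reception occurs when $Z_\mathcal{O}[t]=0$, which preserves monotonicity of the reception indicator; and (ii) a.s.\ finiteness of $X^{\phi_I}_{\mathcal{O}i}[0]$ requires $\mu^{\phi_I}_{\mathcal{O}i}>0$ for a.e.\ realization $\phi_I$, which holds because the infinite product in the conditional success probability converges to a positive limit a.s.\ for $\beta>2$. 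With those noted, what you have is not an alternative proof but a resolution of an open item in the paper.
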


% The conjecture would hold if $\left\{X^{\phi_I}_{\mathcal{O}i}\right\}_{i\in\phi_N}$ are associated random variables~\cite{esary1967association}.
%  Formally, two sets of random variables $\textbf{S}$ and $\textbf{T}$ are associated random variables if for all non-decreasing pairs of functions $f,g$,  $$\mathrm{Cov}\left(f(\mathbf{S},\mathbf{T}),\,g(\mathbf{S},\mathbf{T})\right)\geq 0\, .$$ While in general it is difficult to determine association, for the case of $|\phi_N|=2$ it is readily established (See Appendix~\Cref{app:association}). 

\begin{figure}
    \centering
    \includegraphics[width=1.1\linewidth]{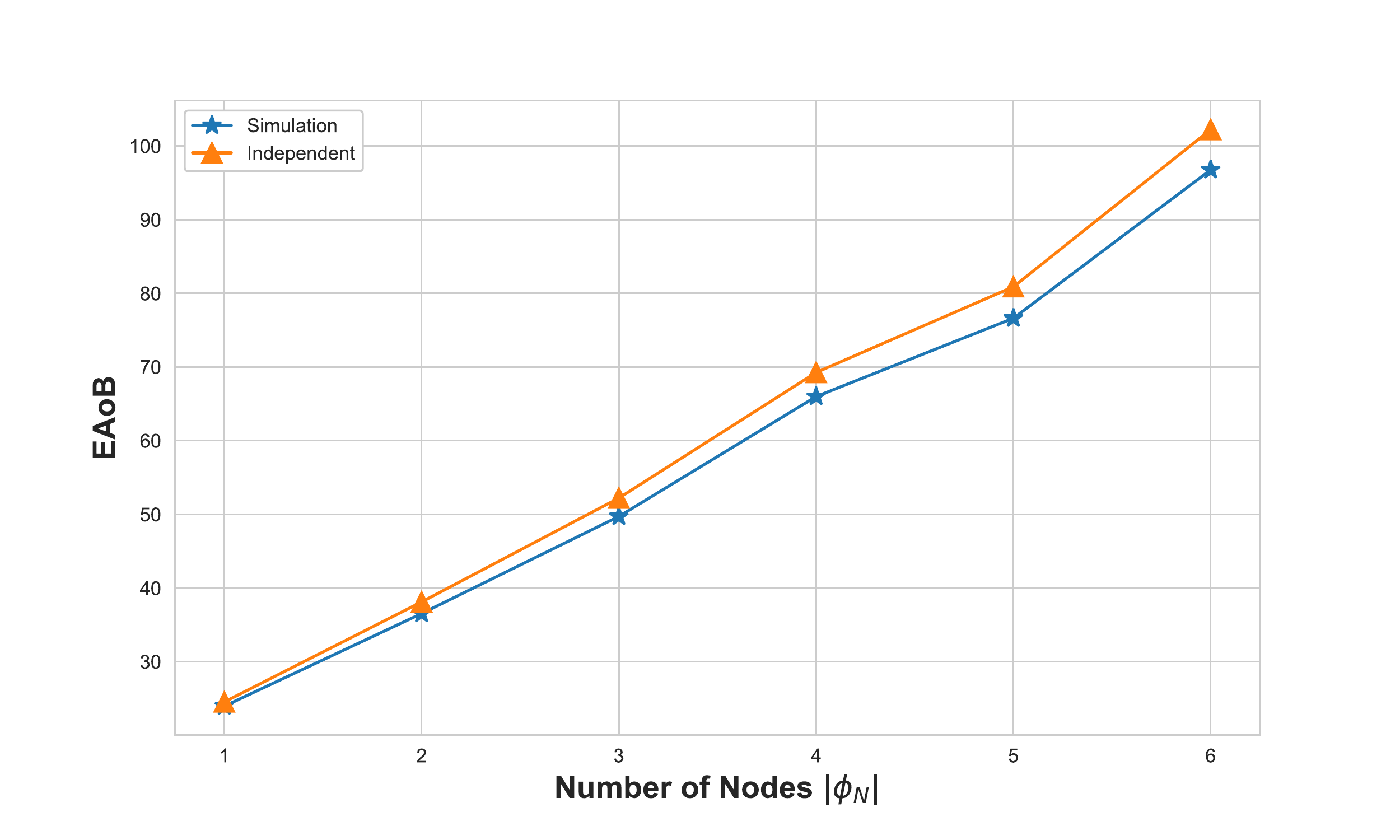}
    \caption{Simulation of Broadcast age compared against the maximum of independent random variables. The actual simulation's EAoB is consistently less than the independent random variable counterpart.}
    \label{fig:associated}
\end{figure}
% \begin{tikzpicture}
% \begin{axis}
% \addplot+[
%     only marks,
%     scatter,
%     mark color=blue,
%     mark size=2.9pt]
% table[x=num_nodes, y=aobs, col sep=comma] {association_infocom.csv};
% \end{axis}
% % \begin{axis}
% % \addplot table [x=num_nodes, y=aobs, col sep=comma] {association_infocom.csv};
% % \end{axis}
% \label{fig:associated}
% \end{tikzpicture}

~\Cref{fig:associated} Illustrates the conjecture stated, plotting the simulated broadcast delay and the broadcast delay expected if all packet reception processes were independent, maintaining the same packet reception distributions.

Having established an explicit formulation of EAoB as well as a conjectured upper bound, we may turn to the more general, instance-independent regime. Upper bounds are pursued in the instance-independent scenario, as outlined in the following subsection.

\subsection{Instance-independent (BI)}\label{subsec:bi}

In~\Cref{subsec:bd} we considered a particular instantiation of the nodes. Here, we take the expectation with respect to node and interferer positions. An upper bound can be found with a differential equation approach. Consider the squared ordered distances of the $i$ nearest receivers to $\mathcal{O}$ denoted $ R_{1}^2\leq R_2^2\leq\,\hdots\,\leq R_{i}^2$.
In two-dimensional Poisson Point Processes such as $\Phi_N$, the squared ordered distances have the same distribution as that of the arrival times in a one-dimensional Poisson Process $\Phi_N'\subset\mathbb{R}^+$ of intensity $\lambda'=\lambda\pi$~\cite{mathar1995distribution} 
% (See~\Cref{fig:sq_order} for visual intuition). 

Focusing on this one-dimensional point process, consider a small interval in $\mathbb{R}^+$ given by $(x,x+\Delta]$ for very small $\Delta$. A receiver $y\in\Phi_N$ exists in the point process $\Phi_N'$ in the interval $(x,x+\Delta]\subset \mathbb{R}^+$ with probability $\lambda'\Delta$. We define the EAoB over the set of receivers in $\Phi_N$ that map to $(0,u]\in\mathbb{R}^+$ to be ${\overline{B}(u)}$. If a receiver does not exist in $(x,x+\Delta]\subset \mathbb{R}^+$, then the EAoB $\overline{B}(x+\Delta)$ would be the same as $\overline{B}(x)$. If a receiver $y$ does exist in the interval, either $y$ receives a packet after all the other receivers in $(0,x]$ with probability $\varsigma$ or it does not with probability $1-\varsigma$.
By setting the probability of $y$ getting a packet after the rest of the receivers to $\varsigma=1$, we upper bound the time to broadcast to all receivers that are in $(0,x+\Delta]$, as shown in the following:
\begin{align}
    \overline{B}(x+\Delta)&=(1-\lambda'\Delta)\overline{B}(x)\\  
    &+(\lambda'\Delta)\left( \overline{B}(x)+\varsigma\frac{1}{\mu\left(\sqrt{x+\Delta}\right)}\right)\\
    &=\overline{B}(x)+(\lambda'\Delta)\varsigma\frac{1}{\mu\left(\sqrt{x+\Delta}\right)}\\
    &\leq \overline{B}(x)+(\lambda'\Delta)\frac{1}{\mu\left(\sqrt{x+\Delta}\right)}\,,\label{eq:diff_ub}
\end{align}
where $\mu(\cdot)$ is given by~\Cref{eq:av_succ}. As mentioned before, we upper bound in~\Cref{eq:diff_ub} by setting $\varsigma$ to 1. The average packet reception delay  $\frac{1}{\mu\left(\sqrt{x+\Delta}\right)}$ is readily found by taking the reciprocal of $\mu(\sqrt{x+\Delta})$ as given by~\Cref{eq:av_succ} since the packet reception process is i.i.d Bernoulli. By bringing $\overline{B}(x)$ over to the left-hand side of the equation, dividing both sides by $\Delta$ and taking the limit as $\Delta\to 0$, we arrive at the following:
\begin{align}
        &\lim_{\Delta\to 0}\frac{\overline{B}(x+\Delta)-\overline{B}(x)}{\Delta}\leq \lim_{\Delta\to 0}\frac{\lambda'}{\mu\left(\sqrt{x+\Delta}\right)}\\
        &\frac{d\overline{B}}{dx}\leq \frac{\lambda'}{\mu\left(\sqrt{x}\right)}=\frac{\lambda\pi}{p}\exp\left(p\lambda\pi C x\right)\\
        &\frac{d\overline{B}}{dr}=\frac{d\overline{B}}{dx}\frac{dx}{dr}\leq  \frac{\lambda'}{\mu\left(r\right)} \cdot 2r=\frac{2\lambda\pi r}{p}\exp\left(p\lambda\pi C r^2\right) \label{eq:diff_intermediate}
    \end{align}

Solving the differential equation in~\Cref{eq:diff_intermediate} with the initial condition $B(0)=0$, we obtain
% \begin{align}
%     \overline{B}(x)=\frac{\lambda\pi^{3/2}\mathrm{erfi}(\sqrt{p\lambda\pi C}x)}{2\sqrt{p\lambda\pi C}}\, ,
% \end{align}
\begin{align}
    B(r)\leq \frac{1}{p^2C}\left(e^{p\lambda\pi C r^2}-1\right)
\end{align}

% In addition to the differential approach, a worst-case upper bound on EAoB can also be found. Analyzing broadcast with a common source of random interference is analytically complicated and motivates an approximating assumption that will be made through the remainder of this section. In particular, we assume that the packet reception process at each receiver is independent of all other receivers.
%  This assumption serves to de-couple dependencies between nodes that make it difficult to calculate AoB but still preserves the dependence of AoB on node positions, and is similar to the mean-field approximation in~\cite{yang_optimizing_2020}.

% We now state the worst-case upper bound, outlined in~\Cref{cl:BI_ub}.
\begin{figure}
    \centering
\includegraphics[width=1.1\linewidth]{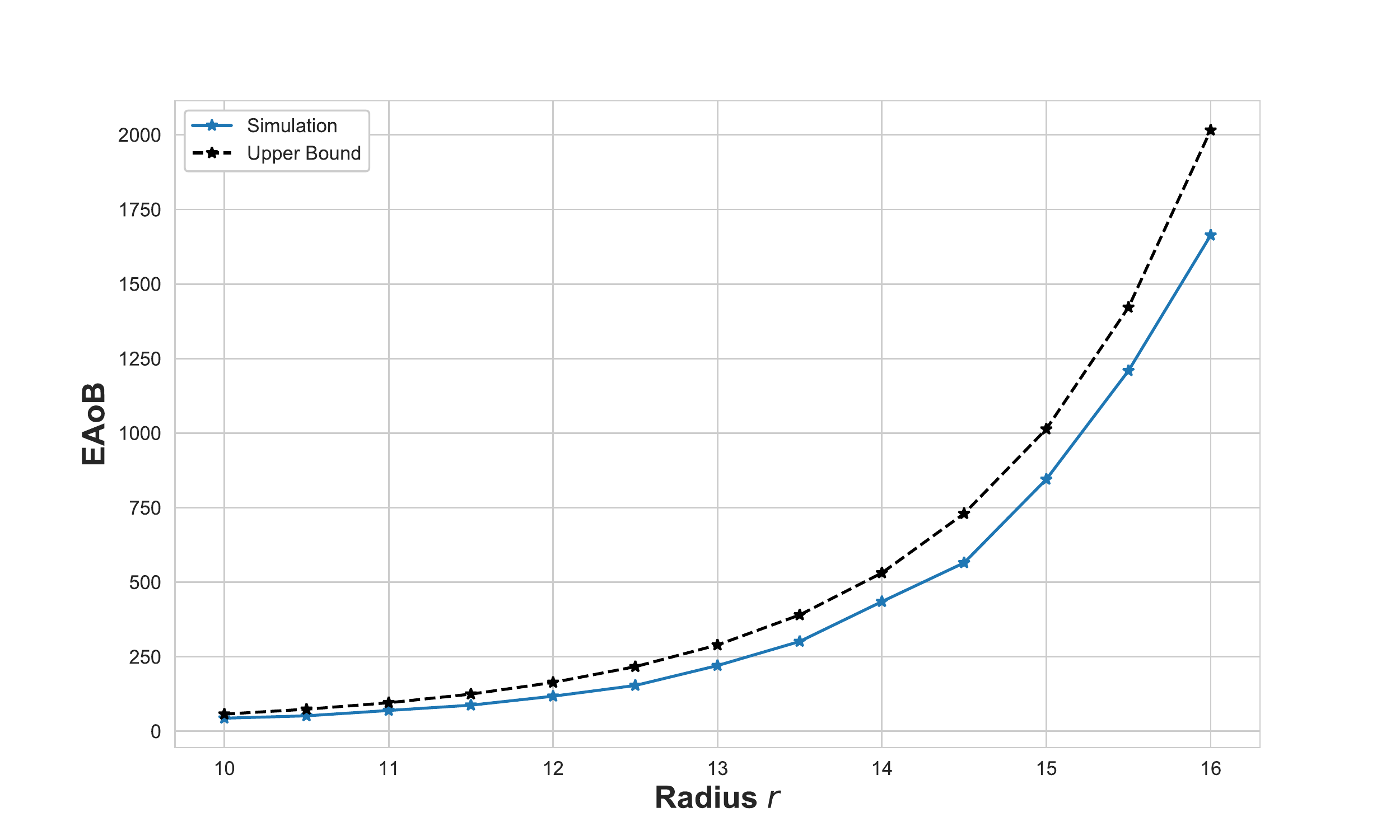}
    \caption{Instance-Independent AoB scaling}
    \label{fig:BI_scaling}
\end{figure}

~\Cref{fig:BI_scaling} compares the upper bound through the differential analysis above against simulation.
% the upper bound in~\Cref{cl:BI_ub}. As can be seen, the differential upper bound appears to perform better. 

In the next section we characterize the collection problem and find bounds on the performance.
\section{Collection}\label{sec:collection}

In the collection problem, the base station acts as a receiver situated at the origin, with a set of transmitters $\phi_N$ positioned within $b_2(0,r)$ sending updates to the base station. We characterize the expected age of collection (EAoC), defined as $\mathbb{E}\left[C^{\phi}_\mathcal{O}[t]\right]$. As in the broadcast section, the collection age is characterized in both instance-dependent and instance-independent regimes.

\subsection{Instance-dependent (CD)}
%  With respect to some transmitter $x\in\phi_N$, the interference $I_x$ which is observed at the base station when receiving a transmission from $x$ is a superposition of both interference from $\phi_I$ as well as the interference induced by the remaining transmitters $\phi_N\setminus x$.

We begin the instance-dependent analysis by connecting AoC with a metric we denote as collection delay. For each transmitter $i\in\phi_N$, define $Y^{\phi}_{i\mathcal{O}}[t]$ to be the time elapsed between the current time $t$ and the successful reception of the next packet transmitted by $i$ to the base station after time $t$. The collection delay $\mathcal{K}^{\phi}_{\mathcal{O}}[t]$ is defined as
\begin{align*}
    \mathcal{K}^{\phi}_{\mathcal{O}}[t]\stackrel{\Delta}{=}\max_{i\in\phi_N}Y^{\phi}_{i\mathcal{O}}[t]
\end{align*}

Since the packet reception process is i.i.d. over time and the AoC process is stationary since the network began at ${t=-\infty}$, the time index can be dropped. By an identical line of reasoning as that in~\Cref{cl:br_delay}, we conclude that 
\begin{align}\label{eq:coll_delay}
    \mathbb{E}\left[C^\phi_{\mathcal{O}}\right]=\mathbb{E}\left[ \mathcal{K}^{\phi}_{\mathcal{O}}\right]\,.
\end{align}

% By stationarity of the packet transmission and reception process, once again due to i.i.d. fading and ALOHA,  one may directly compare $\mathbb{E}[\mathcal{K}(\phi_N,0)]$ against $\mathbb{E}[C_{\mathcal{O}}(\phi_N)]$, and conclude they are equal. This line of reasoning is formalized in the appendix.

Since $\theta >1$, the event of a packet reception at time $t$ at the base station from transmitter $x$ is disjoint from the event of a packet reception in the same time slot from transmitter $j\neq x$. That is, due to the threshold setting being larger than $1$, only a single packet can be received at the receiver in a single time slot. When packet reception events from different transmitters are disjoint and the packet reception process is time invariant, we observe that the update collection process resembles a coupon collection process.

% \begin{remark}
% The average collection age for a collection process with SIR threshold greater than 1 is a coupon collection process (CCP).
% \end{remark}

In the classical Coupon Collector Problem, there are $n$ distinct coupons that are to be collected. Coupons are drawn randomly at each time step. At any time, the probability of drawing any one of $n$ coupons is uniformly $\frac{1}{n}$, independent of all other time steps, and so the resulting average time it takes to draw all $n$ distinct coupons at least once is $nH_n$, where $H_n$ denotes the $n$\textsuperscript{th} harmonic number $H_n=\sum_{k=1}^n\frac{1}{k}$.

The variant of the CCP in the collection scenario is one in which $|\phi_N|$ distinct coupons need to be drawn but do not have a uniform probability of being drawn by the base station~\cite{flajolet1992birthday}. Additionally, there is the possibility of drawing an unwanted NULL coupon -- the event where no packet is successfully received -- which occurs if either no transmitter attempts a transmission or all attempted transmissions failed to exceed $\theta$.

The expression for the average collection can be found, expressed in~\Cref{cl:cct}.

\begin{claim}\label{cl:cct}
Given knowledge of the node and interferer locations $\phi$ the EAoC is given by
\begin{align}\mathbb{E}\left[C_\mathcal{O}^\phi\right]&=\mathbb{E}\left[\max_{i\in\phi_N} Y^\phi_{i\mathcal{O}}[0]\right]\\&=\sum_{i=1}^n(-1)^{i+1}\sum_{A\in [\phi_N]_k}\frac{1}{\sum_{u\in A}\mu^\phi_{u\mathcal{O}}}\, .\end{align} 
\end{claim}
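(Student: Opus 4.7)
The plan is to first use the stationarity identity $\mathbb{E}[C^\phi_\mathcal{O}] = \mathbb{E}[\mathcal{K}^\phi_\mathcal{O}]$ recorded in~\eqref{eq:coll_delay} to reduce the task to computing $\mathbb{E}[\max_{i \in \phi_N} Y^\phi_{i\mathcal{O}}[0]]$, and then to evaluate this expectation via inclusion-exclusion after identifying the per-slot reception process as a weighted coupon collector.

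The key structural observation, noted just before the claim, is that $\theta > 1$ forces the per-slot reception events $\{\mathbbm{1}^\phi_{u\mathcal{O}}[t] = 1\}_{u \in \phi_N}$ to be pairwise disjoint: if two distinct transmitters $u,v$ both satisfied $\mathrm{SIR}^\phi_{\cdot\mathcal{O}}[t] > \theta$ in the same slot, then because each contributes to the other's interference one would have $H_{u\mathcal{O}}\ell(u) > \theta H_{v\mathcal{O}}\ell(v)$ and $H_{v\mathcal{O}}\ell(v) > \theta H_{u\mathcal{O}}\ell(u)$ simultaneously, yielding $1 > \theta^2$, a contradiction. Combined with the i.i.d.\ ALOHA/fading assumption across slots, this shows that each slot independently ``draws'' either a single coupon $u \in \phi_N$ with probability $\mu^\phi_{u\mathcal{O}}$, or a NULL coupon (no reception) with the residual probability $1 - \sum_{u\in\phi_N}\mu^\phi_{u\mathcal{O}}$.

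Next I would expand the expectation via the tail-sum formula and apply inclusion-exclusion over the transmitter set:
\begin{align*}
\mathbb{E}\!\left[\max_{i \in \phi_N} Y^\phi_{i\mathcal{O}}\right]
&= \sum_{k=0}^\infty \mathbb{P}\!\left(\bigcup_{u \in \phi_N}\{Y^\phi_{u\mathcal{O}} > k\}\right) \\
&= \sum_{k=0}^\infty \sum_{\emptyset \neq A \subseteq \phi_N} (-1)^{|A|+1}\,\mathbb{P}\!\left(\bigcap_{u \in A}\{Y^\phi_{u\mathcal{O}} > k\}\right).
\end{align*}
The event $\bigcap_{u \in A}\{Y^\phi_{u\mathcal{O}} > k\}$ holds precisely when none of the first $k$ slots produced a coupon in $A$; by the disjointness above, the per-slot probability of drawing \emph{some} element of $A$ equals $\sum_{u \in A}\mu^\phi_{u\mathcal{O}}$, and independence across slots yields $\mathbb{P}(\bigcap_{u \in A}\{Y^\phi_{u\mathcal{O}} > k\}) = (1 - \sum_{u \in A}\mu^\phi_{u\mathcal{O}})^k$. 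Interchanging the two summations (legitimate because $\phi_N$ is finite and $\sum_{u \in A}\mu^\phi_{u\mathcal{O}} > 0$ for every nonempty $A$, so each inner series converges absolutely) and evaluating the geometric series as $1/\sum_{u \in A}\mu^\phi_{u\mathcal{O}}$, one may then group subsets by cardinality $|A| = k$ to recover the stated identity.

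The only step requiring genuine care is the per-slot disjointness argument, for which the $\theta > 1$ hypothesis is indispensable; once that is in hand, the remainder is the classical weighted coupon-collector derivation and is essentially mechanical bookkeeping.
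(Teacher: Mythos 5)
Your proof is correct and follows essentially the same route as the paper: both reduce EAoC to the expected collection delay via~\eqref{eq:coll_delay}, use the $\theta>1$ disjointness of per-slot receptions to make the first success time over any subset $A\subseteq\phi_N$ geometric with parameter $\sum_{u\in A}\mu^\phi_{u\mathcal{O}}$, and assemble the answer by inclusion-exclusion. The only difference is packaging -- the paper applies the maximum-minimums identity to the random variables and then takes expectations, while you apply inclusion-exclusion to the tail events $\{Y^\phi_{u\mathcal{O}}>k\}$ and sum the geometric series -- and your explicit justification of the per-slot disjointness (which the paper merely asserts) is a welcome addition.
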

\begin{proof} 
Since we have established the equivalence of EAoC and expected collection delay in~\Cref{eq:coll_delay}, we focus on the expected collection delay. We invoke the maximum-minimums identity to represent $\max_{i\in\phi_N}Y^{\phi}_{i\mathcal{O}}$ as a sum of the minima of the non-empty subsets of $\phi_N$. The maximum-minimums identity states that for a finite set of numbers $A$ with cardinality $n$,
\begin{align}\label{eq:max_min_id}
    \max A=\sum_{k=1}^n (-1)^{k+1}\sum_{L\in [A]_k}\min {L}\,.
\end{align}
Applying the identity to $\max_{i\in\phi_N}Y^{\phi}_{i\mathcal{O}}$,
\begin{align}
    \mathbb{E}\left[\max_{i\in\phi_N}Y^{\phi}_{i\mathcal{O}}\right]=\sum_{k=1}^{|\phi_N|}(-1)^{k+1}\sum_{A\in[\phi_N]_k}\mathbb{E}\left[\min_{j\in A}Y^{\phi}_{j\mathcal{O}}\right]\, .
\end{align}
Due to the disjointedness of packet reception events between any transmitters in a time slot $t$, the random variable $\min_{j\in A}Y^{\phi}_{j\mathcal{O}}$ is a geometric random variable with parameter $\sum_{j\in A}\mu_{j\mathcal{O}}^\phi$. Thus,
\begin{align}
 \mathbb{E}[\max_{i\in\phi_N}Y^{\phi}_{i\mathcal{O}}]=\sum_{k=1}^{|\phi_N|}(-1)^{k+1}\sum_{A\in[\phi_N]_k}\frac{1}{\sum_{j\in A}\mu^\phi_{j\mathcal{O}}}\,,
\end{align}
 and the proof is complete.
\end{proof}

We proceed to find bounds on the EAoC in the instance-independent case in the following subsection.

\subsection{Instance-independent (CI)}

In this scenario, the locations of the transmitters are no longer assumed to be known, distributed according to the Poisson Point Process $\Phi_N$. We begin with an upper bound on EAoC. Conditioning on the size of $\Phi_N$ to be $n$, the nodes are distributed i.i.d. uniform in the disk $b_2(\mathcal{O},r)$. Based on this conditioning, and assuming no nodes are present within a small distance $\epsilon$ of the base station, an upper bound for EAoC is outlined in the following claim:

\begin{claim}\label{cl:ci_ub}
Conditioned on the number of transmitters $|\Phi_N|=n$ in the disk $b_2(\mathcal{O},r)$ the EAoC is bounded above as given by
\begin{align}
\mathbb{E}\left[C^\phi_\mathcal{O}\,\big|\, |\Phi_N|=n\right]\leq \left(\overline{\mu}\right)^{-1}H_n\, ,
\end{align}
% ~where \begin{align*}
%         &\overline{\mu}_{\zeta_{n:1}}=\mathbb{E}\left[\zeta_{n:1}^\beta\right]\\
%         &=\frac{2n(n-1)}{(1-\epsilon^2)^n}\left[\mathrm{B}\left(1;1-\frac{\beta}{2};n-1\right)\left(\frac{1-\epsilon^{2n}}{2n}\right)\right.\\
%          &\left.+\frac{2}{\beta-2}\sum_{k=0}^{n-2}\frac{(a)_k(b)_k}{(c)_k k!}\frac{\epsilon^{2k+2-\beta}-\epsilon^{2n}}{2(n-k)+\beta-2}\right]\\
%     &a=1-\frac{\beta}{2};\quad b=2-n;\quad c=2-\frac{\beta}{2}
% \end{align*}
where
\begin{align}
    \overline{\mu}=\left(1-\frac{p}{1+\theta 
    \left(\frac{\epsilon}{r}\right)^{\beta}}\right)^{n-1}\cdot \mu(r)
\end{align}

% where the $(\cdot)_k$ denotes the Pochhammer function, $(y)_k=y(y+1)\hdots(y+k-1)$, and $\zeta_{n:1}$ is defined as the distance of the farthest transmitter divided by the distance of the closest transmitter.
\end{claim}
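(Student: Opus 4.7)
My plan is to reduce the collection problem to a generalized coupon-collector calculation with a uniform per-transmitter success probability $\overline{\mu}$. The two ingredients are a coupon-collector telescoping bound and a uniform lower bound on the per-slot reception probability.

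\textbf{Coupon-collector telescoping.} Because $\theta>1$, at most one packet can be received per slot (this is exactly the disjointness observation used in~\Cref{cl:cct}), so the $n$ transmitters act as $n$ distinct coupons whose per-slot reception events are disjoint. If every per-transmitter, per-slot reception probability is at least some $q$, then once $k-1$ distinct packets have been collected the probability of collecting a new one in any subsequent slot is at least $(n-k+1)q$, and the expected waiting time for the $k$th new reception is at most $1/((n-k+1)q)$. Summing,
\begin{equation*}
\mathbb{E}\!\left[C^{\phi}_{\mathcal{O}}\,\big|\,|\Phi_N|=n\right]\leq\sum_{k=1}^{n}\frac{1}{(n-k+1)q}=\frac{H_n}{q}.
\end{equation*}

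\textbf{Uniform lower bound $q=\overline{\mu}$.} Using the $\Phi_I$-deconditioned expression from~\Cref{sec:preliminaries},
\begin{equation*}
\mu^{\phi_N}_{x\mathcal{O}}=\mu(\|x\|)\prod_{j\in\phi_N\setminus\{x\}}\left(1-\frac{p}{1+\theta\,\ell(x)/\ell(j)}\right),
\end{equation*}
I would bound each factor under the $\epsilon$-exclusion hypothesis ($\epsilon\leq\|y\|\leq r$ for every $y\in\phi_N$): (i) $\mu(\|x\|)\geq\mu(r)$ since $\mu(\cdot)$ is decreasing in distance and $\|x\|\leq r$, and (ii) each factor in the product satisfies $\ell(x)/\ell(j)=(\|j\|/\|x\|)^\beta\geq(\epsilon/r)^\beta$, so each factor is at least $1-p/(1+\theta(\epsilon/r)^\beta)$. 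There are $n-1$ such factors, so multiplying yields $\mu^{\phi_N}_{x\mathcal{O}}\geq\overline{\mu}$ uniformly in $x$.

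Setting $q=\overline{\mu}$ in the telescoping bound then produces the claim. The main obstacle is reconciling the two ingredients: the coupon-collector argument is cleanest when the per-slot reception events are i.i.d.\ Bernoulli with known success probabilities, whereas after marginalizing out the interferer field $\Phi_I$ the per-slot events inherit correlation through the shared realization of $\Phi_I$. I would close this gap with a stochastic-domination coupling: compare the actual process to a pessimistic process in which each transmitter's per-slot reception is i.i.d.\ Bernoulli($\overline{\mu}$), disjoint across transmitters within each slot, and argue that the actual collection time is stochastically dominated by the pessimistic one. Applying the telescoping bound to the pessimistic process and taking expectation recovers the stated inequality.
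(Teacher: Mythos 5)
Your argument is essentially the paper's: the paper likewise lower-bounds every transmitter's per-slot success probability by that of a ``most disadvantaged'' node at distance $r$ whose $n-1$ interfering peers sit at distance $\epsilon$ (which is exactly your factor-by-factor bound on $\mu^{\phi_N}_{x\mathcal{O}}$), and then invokes the coupon-collector waiting-time sum $\sum_{k=1}^{n}\frac{1}{k\overline{\mu}}=H_n/\overline{\mu}$, justified by the same disjointness-of-receptions observation from $\theta>1$ that you cite. The only place you go beyond the paper is the proposed stochastic-domination coupling to handle the correlation induced by the shared realization of $\Phi_I$; be careful here, because that domination does not follow from the marginal bound alone. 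For a fixed $\Phi_I$ the slots are conditionally i.i.d.\ with a \emph{random} success probability, and Jensen gives $\mathbb{E}_{\Phi_I}\left[1/\mu^{\phi_I}_{x\mathcal{O}}\right]\geq 1/\mathbb{E}_{\Phi_I}\left[\mu^{\phi_I}_{x\mathcal{O}}\right]$, i.e.\ the cross-slot correlation pushes the expected collection time \emph{above} what an i.i.d.\ Bernoulli$(\overline{\mu})$ process would give, so the coupling you sketch would not close the gap. The paper itself does not confront this issue: it treats the deconditioned probability $\mu(r)$ directly as the parameter of an i.i.d.\ per-slot reception process, and under that reading your first two ingredients already constitute the paper's proof.
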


\begin{proof}
The most disadvantaged transmitter in terms of successful delivery probability is one situated at a distance $r$ from the base station, while the remaining $n-1$ transmitters are close enough to the base station to observe no path loss, i.e. at distance $\epsilon$. The success probability of this disadvantaged transmitter is
\begin{align}
    \left(1-\frac{p}{1+\theta \left(\frac{\epsilon}{r}\right)^{\beta}}\right)^{n-1}\cdot\mu(r)\,,
\end{align}
since each of the other transmitters contribute equally to the interference observed at the base station.

If all transmitters have this same pessimistic delivery probability, the EAoC would be that of a classical CCP, the resulting EAoC given in~\Cref{cl:ci_ub}. 
\end{proof}

% \begin{figure}
%     \centering
%     \includegraphics[width=1.1\linewidth]{graphics/eps_scaling.png}
%     \caption{Scaling behavior of average collection age in annulus $\mathcal{A}(\epsilon,1)$}
%     \label{fig:CI_scaling}
% \end{figure}

We now present numerical simulations that highlight the interplay between broadcast and collection.

\section{Numerical Results}\label{sec:numerical}
% \textcolor{blue}{ Simulation of base station and nodes switching between broadcast and collection in Bernoulli Fashion. Numerically find optimal switching probability, i.e. if $p$ represents probability of broadcasting and $1-p$ the probability of collecting, find optimal $p$ for minimizing some joint criterion, e.g. $AoB+AoC$}
\begin{table}
\centering
\rowcolors{2}{gray!25}{white}
% \begin{tabular}{@{}llr@{}} \toprule
\begin{tabular}{@{}p{0.13\linewidth}p{0.6\linewidth}@{}} \toprule
\multicolumn{2}{c}{Simulation Parameter Settings} \\ \cmidrule(r){1-2}
\rowcolor{white}
Parameter & Default Value \\ \midrule
$\lambda$ & \begin{tabular}[x]{@{}l@{}}$1.0\mathrm{e}{-2}$\end{tabular}\\
$\theta$ & \begin{tabular}[x]{@{}l@{}}5\end{tabular}\\
$r$ & \begin{tabular}[x]{@{}l@{}}$10$\end{tabular}\\
$\beta$ & \begin{tabular}[x]{@{}l@{}}4.0\end{tabular}\\
$p$ & \begin{tabular}[x]{@{}l@{}}$0.2$\end{tabular}\\
 \bottomrule
\end{tabular}
    \caption{Table of default parameter values when held constant as part of the numerical simulation}
    \label{tab:simulation}
\end{table}
% \subsection{Broadcast}
% A comparison of the bounds with simulated broadcast can be seen in~\Cref{fig:BI_scaling}. Monte-Carlo style simulations are conducted averaging over 500 runs for  each value of the disk  radius. The parameter settings chosen are $\lambda_{rx}=10^{-3},\ r=\begin{bmatrix}40 & 42 & 44 & 46 & 48 & 50\end{bmatrix},\ \theta=2,\ \beta=3.8,\ p_tx=0.2$ As can be seen, the differential bound begins to diverge after \textcolor{red}{r=...}. Naturally, the definitive sharp upper bound is considerably larger than that of the simulated broadcast.

We examine EAoB and EAoC with different network parameter settings using numerical simulation. Unless stated otherwise, the default network parameter settings for simulation are provided in~\Cref{tab:simulation}. The EAoB and EAoC for each parameter settings is determined using Monte Carlo simulation, simulating $250000$ time slots per trial.

\begin{figure}
    \centering
    \includegraphics[width=1.1\linewidth]{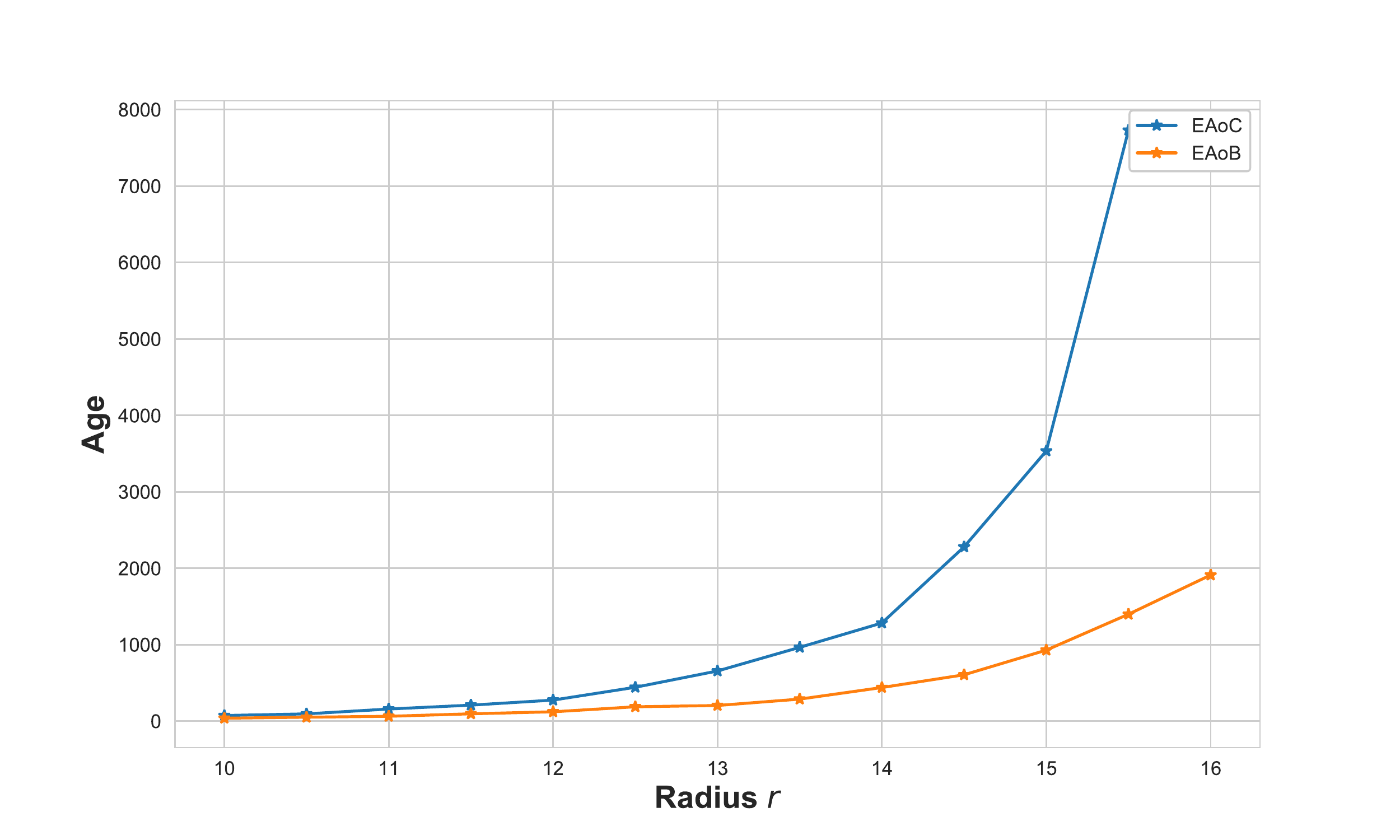}
    \caption{Instance-independent Age scaling with radius $r$ going from $10$ to $15.5$}
    \label{fig:age_r_scaling}
\end{figure}

\begin{figure}
    \centering
    \includegraphics[width=1.1\linewidth]{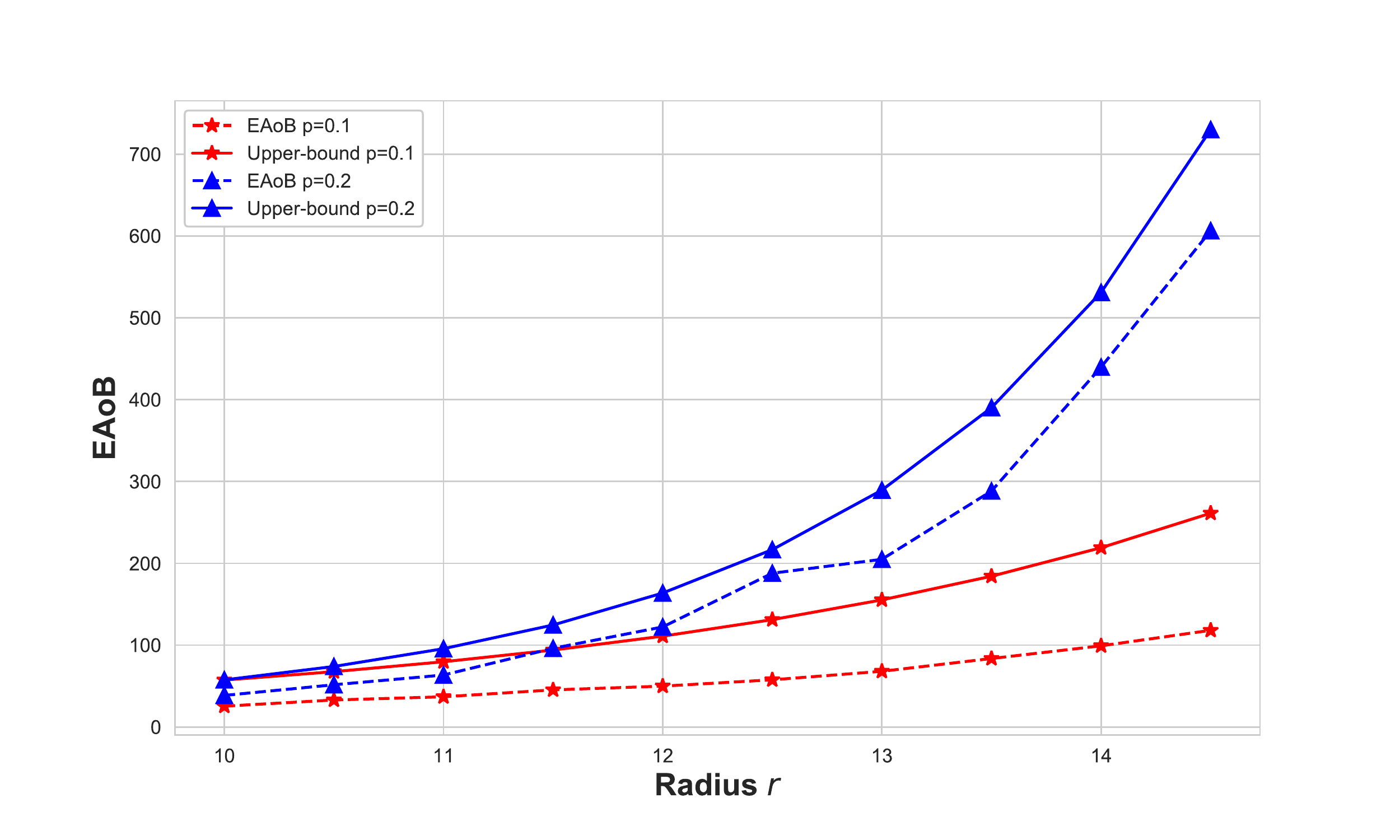}
    \caption{Instance-independent EAoB scaling with radius $r$ going from $10$ to $14$}
    \label{fig:aob_r_scaling}
\end{figure}
\begin{figure}
    \centering
    \includegraphics[width=1.1\linewidth]{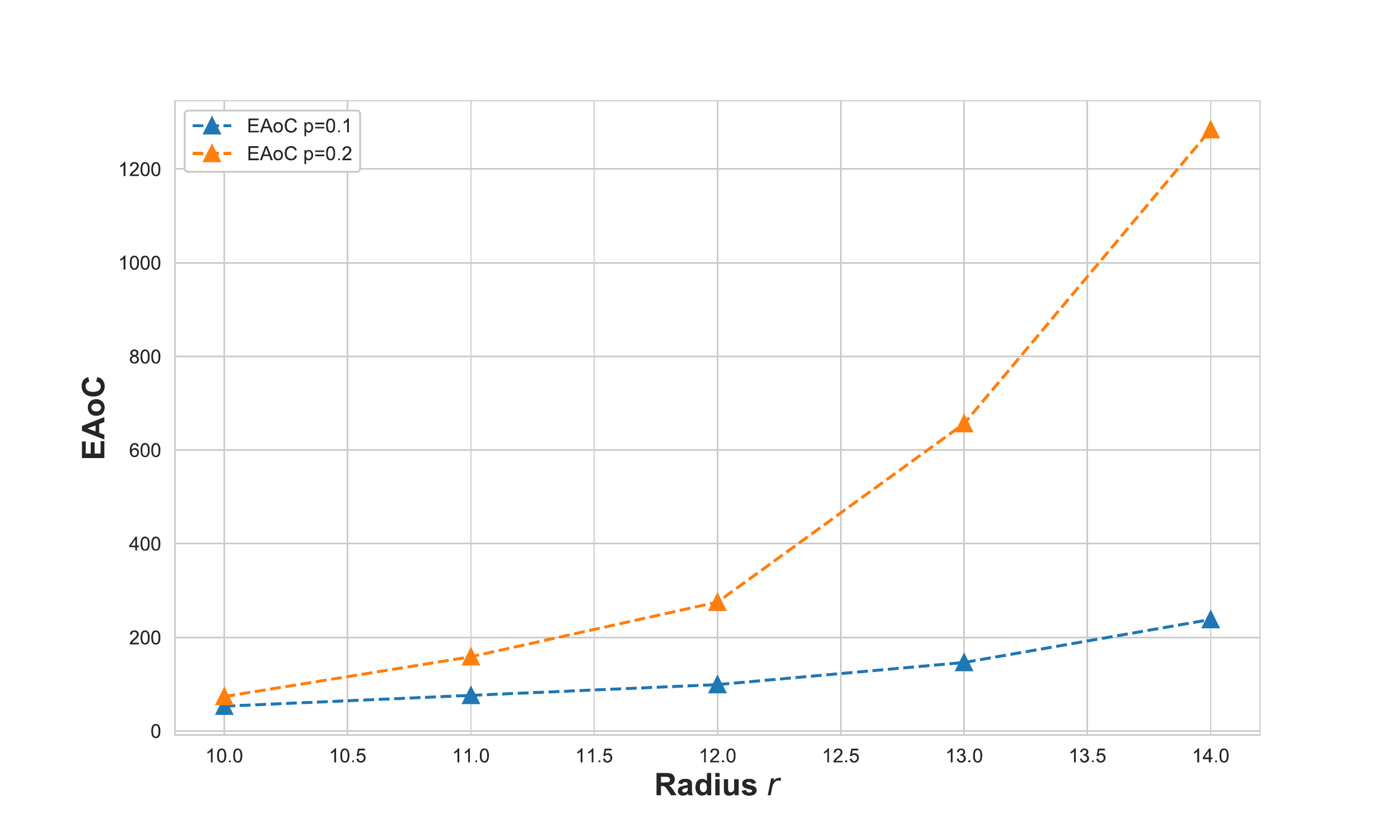}
    \caption{Instance-independent EAoC scaling with radius $r$ going from $10$ to $14$}
    \label{fig:aoc_r_scaling}
\end{figure}
% In~\cref{fig:CI_scaling}, the EAoC scaling can be seen to be 

\begin{figure}
    \centering
    \includegraphics[width=1.1\linewidth]{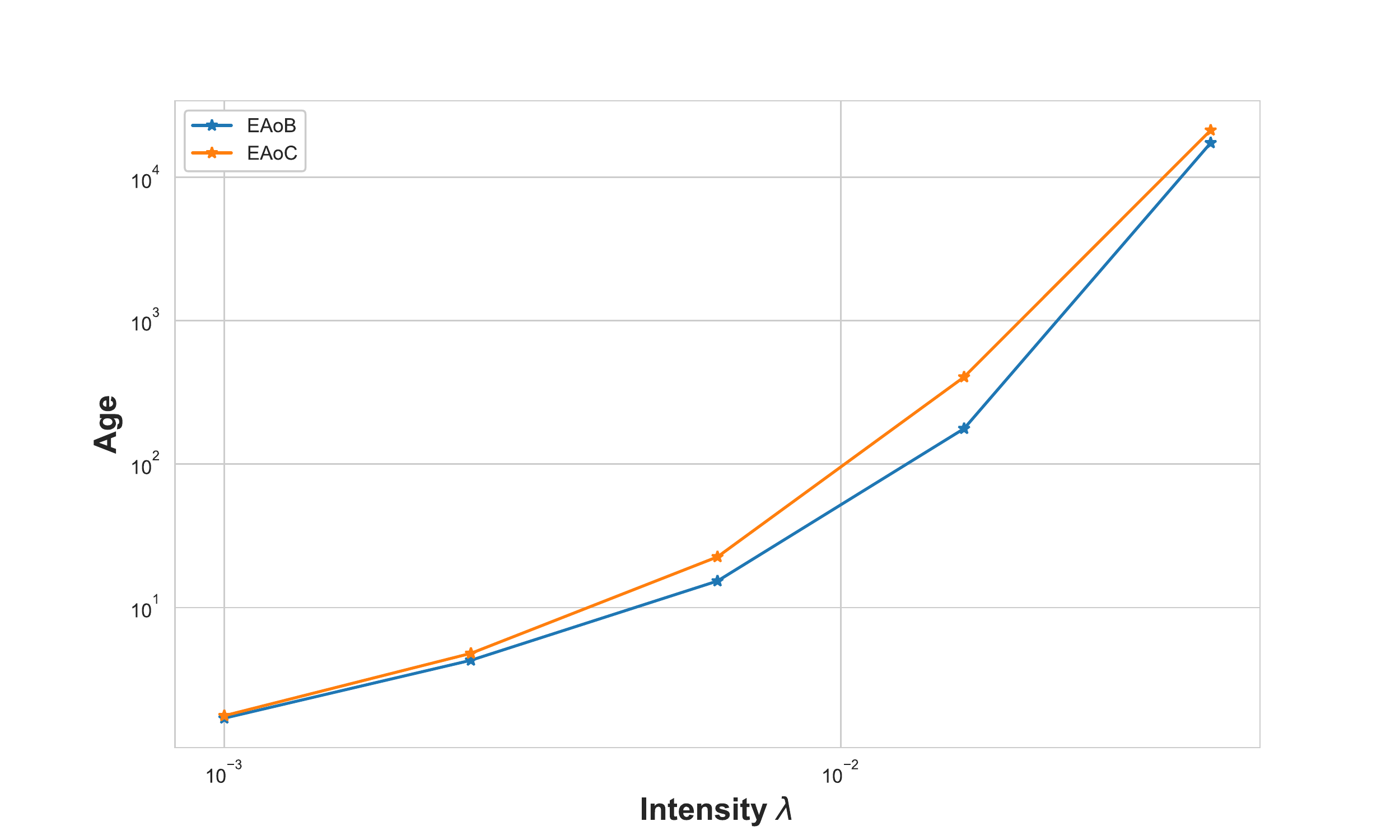}
    \caption{Instance-independent EAoB and EAoC scaling, with intensity $\lambda$ going from $1e-3$ to $1e-1$.}
    \label{fig:density_scaling}
\end{figure}
In~\Cref{fig:age_r_scaling}, the scaling behavior of EAoB and EAoC combined is shown as the radius of $b_2(\mathcal{O},r)$ is increased. The scaling behavior is super-exponential in both cases, with EAoC consistently larger than EAoB for the same radius. This is expected, since in the collection scenario at most one packet can be delivered to the base station in a given time slot, whereas when broadcasting, it is possible for multiple receivers to get a packet simultaneously. Moreover, the interference observed at the base station in the collection case is, in expectation, larger since both $\Phi_I$ and $\Phi_N$ are sources of interference when collecting.~\Cref{fig:aob_r_scaling,fig:aoc_r_scaling} plot EAoB and EAoC versus radius for different values of the medium access probability, showing that a greater value of $p$ result in larger age for these parameter settings. In~\Cref{fig:density_scaling}, the density is varied on a logarithmic scale. The figure depicts the exponential growth of EAoB and EAoC with respect to node and interferer intensity $\lambda$.
\section{Conclusion}\label{sec:conclusion}
We defined AoB and AoC as information freshness metrics suitable for the cases of broadcast and collection, respectively, in spatially-distributed wireless networks. We characterized the expected AoB and AoC when the locations of nodes and interferers are known and unknown. When the locations are known and the packet transmission process is stationary, we showed that expected AoB and AoC were equivalent to the expected broadcast delay and collection delay, respectively. Upper-bounds were found in the instance-independent scenario: the AoB upper-bound is a solution to a differential equation, and the AoC upper-bound uses the solution to the worst-case packet delivery success probability given a small exclusion radius $\epsilon$. We demonstrated through numerical simulation the relation between AoB and AoC and network parameters such as density and medium access probability.

\appendices

\section{Proof of~\Cref{cl:br_delay}}\label{app:br_delay_pf}
\begin{proof}
For any time $t$ the packet reception vector $\Vec{{\mathbbm{1}}}^{\phi_I}_{\mathcal{O}}[t]=\left\{\mathbbm{1}^{\phi_I}_{\mathcal{O}i}\right\}_{i=1}^{|\phi_N|}$ can take values from a finite set of states $\mathcal{N}$ of cardinality $2^{|\phi_N|}$.  Therefore, the packet reception process over time is a discrete-time Markov Chain. Since the packet reception process is stationary and time-invariant by virtue of the ALOHA transmissions and i.i.d fading, this Markov Chain is irreducible, aperiodic, and positive recurrent. We denote the probability that the process is in state $u\in\mathcal{N}$ to be $p_u$.  WLOG, we denote the probability that the process transitions from state $j\in\mathcal{N}$ at time $t$ to $k\in\mathcal{N}$ at time $t+1$ to be $p_{jk}$ for any time $t\geq 0$. By Kolmogorov's criterion for Markov chain time-reversibility, the process is time-reversible if and only if the condition
\begin{align}\label{eq:kolmogorov}
    p_{{j_1}{j_2}}p_{{j_2}{j_3}}\hdots p_{{j_{n-1}}{j_n}}p_{j_{n}{j_1}}=p_{j_{1}{j_n}}p_{{j_{n}}{j_{n-1}}}\hdots p_{{j_3}{j_2}}p_{{j_2}{j_1}}
\end{align}
 is satisfied for all finite sequences of states $j_1,j_2,\hdots,j_n\in\mathcal{N}$.

Since the event the process is in state $j$ at time $t$ is independent of the event it is in state $k$ at time $t+1$ for all time $t\geq 0$, the condition in~\Cref{eq:kolmogorov} holds, establishing that the packet reception process is time-reversible. As shown  in~\Cref{fig:three graphs}, for any time $t$, the AoB $B^{\phi}_{\mathcal{O}}[t]$ is shown to be equivalent to $D^{\phi}_{\mathcal{O}}(t')$, the broadcast delay of the time-reversed process starting from the same time slot. Since time-reversibility implies that expectations with respect to forward time are the same as that over reverse time, coupled with the fact that  $\mathbb{E}\left[D^{\phi}_{\mathcal{O}}[t]\right]=\mathbb{E}\left[D^{\phi}_{\mathcal{O}}\right]$, we may conclude that 

\begin{align*}
    \mathbb{E}\left[B^{\phi}_{\mathcal{O}}\right]=\mathbb{E}\left[D^{\phi}_{\mathcal{O}}\right]\,.
\end{align*}
\end{proof}

\ifCLASSOPTIONcaptionsoff
  \newpage
\fi

% trigger a \newpage just before the given reference
% number - used to balance the columns on the last page
% adjust value as needed - may need to be readjusted if
% the document is modified later
%\IEEEtriggeratref{8}
% The "triggered" command can be changed if desired:
%\IEEEtriggercmd{\enlargethispage{-5in}}

% references section

% can use a bibliography generated by BibTeX as a .bbl file
% BibTeX documentation can be easily obtained at:
% http://mirror.ctan.org/biblio/bibtex/contrib/doc/
% The IEEEtran BibTeX style support page is at:
% http://www.michaelshell.org/tex/ieeetran/bibtex/
% argument is your BibTeX string definitions and bibliography database(s)
\bibliographystyle{IEEEtran}
\bibliography{daoi,additional}

% <OR> manually copy in the resultant .bbl file
% set second argument of \begin to the number of references
% (used to reserve space for the reference number labels box)

% biography section
% 
% If you have an EPS/PDF photo (graphicx package needed) extra braces are
% needed around the contents of the optional argument to biography to prevent
% the LaTeX parser from getting confused when it sees the complicated
% \includegraphics command within an optional argument. (You could create
% your own custom macro containing the \includegraphics command to make things
% simpler here.)
%\begin{IEEEbiography}[{\includegraphics[width=1in,height=1.25in,clip,keepaspectratio]{mshell}}]{Michael Shell}
% or if you just want to reserve a space for a photo:

% \begin{IEEEbiography}[{\includegraphics[width=1in,height=1.25in,clip,keepaspectratio]{graphics/chirag_portrait}}]{Chirag R. Rao}
% Biography text here.
% \end{IEEEbiography}

% if you will not have a photo at all:
% \begin{IEEEbiographynophoto}{Eytan Modiano}
% Biography text here.
% \end{IEEEbiographynophoto}

% insert where needed to balance the two columns on the last page with
% biographies
%\newpage

% You can push biographies down or up by placing
% a \vfill before or after them. The appropriate
% use of \vfill depends on what kind of text is
% on the last page and whether or not the columns
% are being equalized.

%\vfill

% Can be used to pull up biographies so that the bottom of the last one
% is flush with the other column.
%\enlargethispage{-5in}

\end{document}